\renewcommand{\Re}{\operatorname{Re}}
\newcommand{\abs}[1]{\vert #1 \vert}
\newcommand{\Tr}{\operatorname{Tr}}
\newcommand{\norm}[1]{\Vert #1 \Vert}
\newcommand{\st}{\,\vert\,}
\newcommand{\ket}[1]{\vert{ #1 }\rangle}
\newcommand{\bra}[1]{\langle{ #1 }\vert}
\newcommand{\ketbra}[2]{\vert #1 \rangle \langle #2 \vert}
\newcommand{\braket}[2]{\langle #1 \vert #2 \rangle}
\newcommand{\mean}[1]{\langle #1 \rangle}
\theoremstyle{definition}
\newtheorem{theorem}{Theorem}
\newtheorem{lemma}{Lemma}
\theoremstyle{remark}
\newcommand{\figpath}{.}
\newcommand{\normLR}[1]{\left\Vert #1 \right\Vert}
\newcommand{\absLR}[1]{\left\vert #1 \right\vert}
\newcommand{\bfs}{\boldsymbol{S}}
\newcommand{\calP}{\mathcal{P}}
\begin{document}

\title{Error-resilient Monte Carlo quantum simulation of imaginary time}

\author{Mingxia Huo}

\affiliation{Department of Physics and Beijing Key Laboratory for Magneto-Photoelectrical Composite and Interface Science, School of Mathematics and Physics, University of Science and Technology Beijing, Beijing 100083, China}

\author{Ying Li}
\email{yli@gscaep.ac.cn}
\affiliation{Graduate School of China Academy of Engineering Physics, Beijing 100193, China}

\maketitle

\begin{abstract}
Computing the ground-state properties of quantum many-body systems is a promising application of near-term quantum hardware with a potential impact in many fields. The conventional algorithm quantum phase estimation uses deep circuits and requires fault-tolerant technologies. Many quantum simulation algorithms developed recently work in an inexact and variational manner to exploit shallow circuits. In this work, we combine quantum Monte Carlo with quantum computing and propose an algorithm for simulating the imaginary-time evolution and solving the ground-state problem. By sampling the real-time evolution operator with a random evolution time according to a modified Cauchy-Lorentz distribution, we can compute the expected value of an observable in imaginary-time evolution. Our algorithm approaches the exact solution given a circuit depth increasing polylogarithmically with the desired accuracy. Compared with quantum phase estimation, the Trotter step number, i.e.~the circuit depth, can be thousands of times smaller to achieve the same accuracy in the ground-state energy. We verify the resilience to Trotterisation errors caused by the finite circuit depth in the numerical simulation of various models. The results show that Monte Carlo quantum simulation is promising even without a fully fault-tolerant quantum computer. 
\end{abstract}

\tableofcontents

\section{Introduction}

Solving a theoretical model is basic in physics for producing useful predictions. However, many models in quantum mechanics are computationally hard. In the 1980s, Richard Feynman conceived a way to solve this problem~\cite{Feynman1982}, i.e.~``{\it the possibility that there is to be an exact simulation, that the computer will do exactly the same as nature.}'' This idea is one of the main motivations for developing quantum computing technologies. In the 1990s, Seth Lloyd proposed the Trotterisation algorithm to simulate the real-time evolution on a quantum computer~\cite{Lloyd1996}. In comparison with real-time evolution, the ground-state problem draws more attention as it determines the properties of matter, such as nuclei~\cite{Carlson2015}, molecules~\cite{Hammond1994} and condensate matter systems~\cite{Foulkes2001, Schollwoeck2005}. We can solve the ground-state problem on a quantum computer with the quantum phase estimation (QPE) algorithm~\cite{Abrams1999, AspuruGuzik2005}. Trotterisation and QPE are quasi-exact, in which the algorithmic error decreases at a polynomial cost in time and qubits. However, approaching the exact solution is at a cost. It is widely believed that the implementation of these two algorithms needs a fault-tolerant quantum computer~\cite{Wecker2014, Reiher2017, Babbush2018}. Practical fault-tolerant technologies are still up to development because of the large qubit overhead required by quantum error correction~\cite{Knill1998, Fowler2012}. In this situation, quantum simulation algorithms suitable for an intermediate-scale noisy quantum (NISQ)~\cite{Preskill2018} computer without error correction become important~\cite{Peruzzo2014, Wecker2015, McArdle2019, Motta2019, Lin2021, Huggins2022}. 

QPE as a ground-state solver (GSS) relies on an accurate real-time simulation (RTS), which is usually called Hamiltonian simulation. In quantum mechanics, real-time evolution is represented by unitary operators. Trotterisation is a specific realisation of the time evolution operator according to the Trotter formula. For an accurate Trotterisation, the evolution time has to be sliced into many steps, resulting in a deep quantum circuit. In this paper, we propose a quantum GSS algorithm resilient to Trotterisation errors allowing one to compute the ground state with shallow Trotterisation circuits. This situation is similar to classical computing. Some classical GSS algorithms are successful in certain models, for instance, quantum Monte Carlo (QMC)~\cite{Carlson2015, Hammond1994, Foulkes2001} and density-matrix renormalisation group~\cite{Schollwoeck2005}. However, their counterparts for generic RTS are inefficient~\cite{Alexandru2016, Vidal2004}. In quantum computing, our GSS algorithm is more feasible than RTS with the same number of Trotter steps because GSS is still accurate when RTS has significant errors. 

We propose a quantum algorithm to simulate the imaginary-time evolution~\cite{Wick1954}, i.e.~the imaginary-time simulation (ITS). Instead of realising the non-unitary evolution by measurement or approximating it with a unitary evolution~\cite{McArdle2019, Motta2019, Lin2021, Liu2021, Turro2022}, we simulate the imaginary-time evolution by randomly sampling quantum circuits. Imaginary- and real-time evolution operators (i.e.~$e^{-\beta H}$ and $e^{-iHt}$) can be viewed as functions of the Hamiltonian. An integral formula connects these two functions: we can express the imaginary-time evolution operator as a weighted integral of the real-time evolution operator with different time values. The weight function is a product of Lorentz and Gaussian functions. Real-time evolution operators are randomly sampled and implemented with quantum circuits. Then, we can evaluate an observable in imaginary-time evolution as an average over real-time evolution. Although based on the Monte Carlo method, our algorithm utilising quantum circuits is free of the sign problem in conventional QMC. Note that the weight function is always positive. The sign is determined by real-time evolution. We show that the sign oscillation caused by real-time evolution can be controlled by taking the proper parameters in our algorithm. Analysing the complexity, we find that the Trotter step number (i.e.~circuit depth) increases polynomially with the system size and imaginary time and {\it polylogarithmically} with the desired accuracy. This polylogarithmic scaling behaviour is obtained by cancelling Trotterisation errors according to the leading-order rotation (LOR) formula~\cite{Yang2021}. Given the imaginary-time evolution, we solve the ground-state problem following the projector QMC approach~\cite{Haaf1995, Motta2018}. The circuit depth required in our GSS algorithm increases polynomially with the system size and accuracy as the same as QPE~\cite{Lee2021} in the general case, and the depth increases polylogarithmically with the accuracy when an energy gap above the ground state exits. 

In addition to the complexity analysis, we demonstrate that our algorithm has strong resilience to Trotterisation errors in the numerical simulation of various quantum many-body models. In all the models, ITS and GSS in our algorithm are much more accurate than RTS, given the same number of Trotter steps. Compared with QPE, the step number can be thousands of times smaller to achieve the same accuracy. Because of the error resilience, our algorithm is a promising candidate for near-term quantum computers. 

\begin{figure*}[t]
\begin{center}
\includegraphics[width=1\linewidth]{\figpath/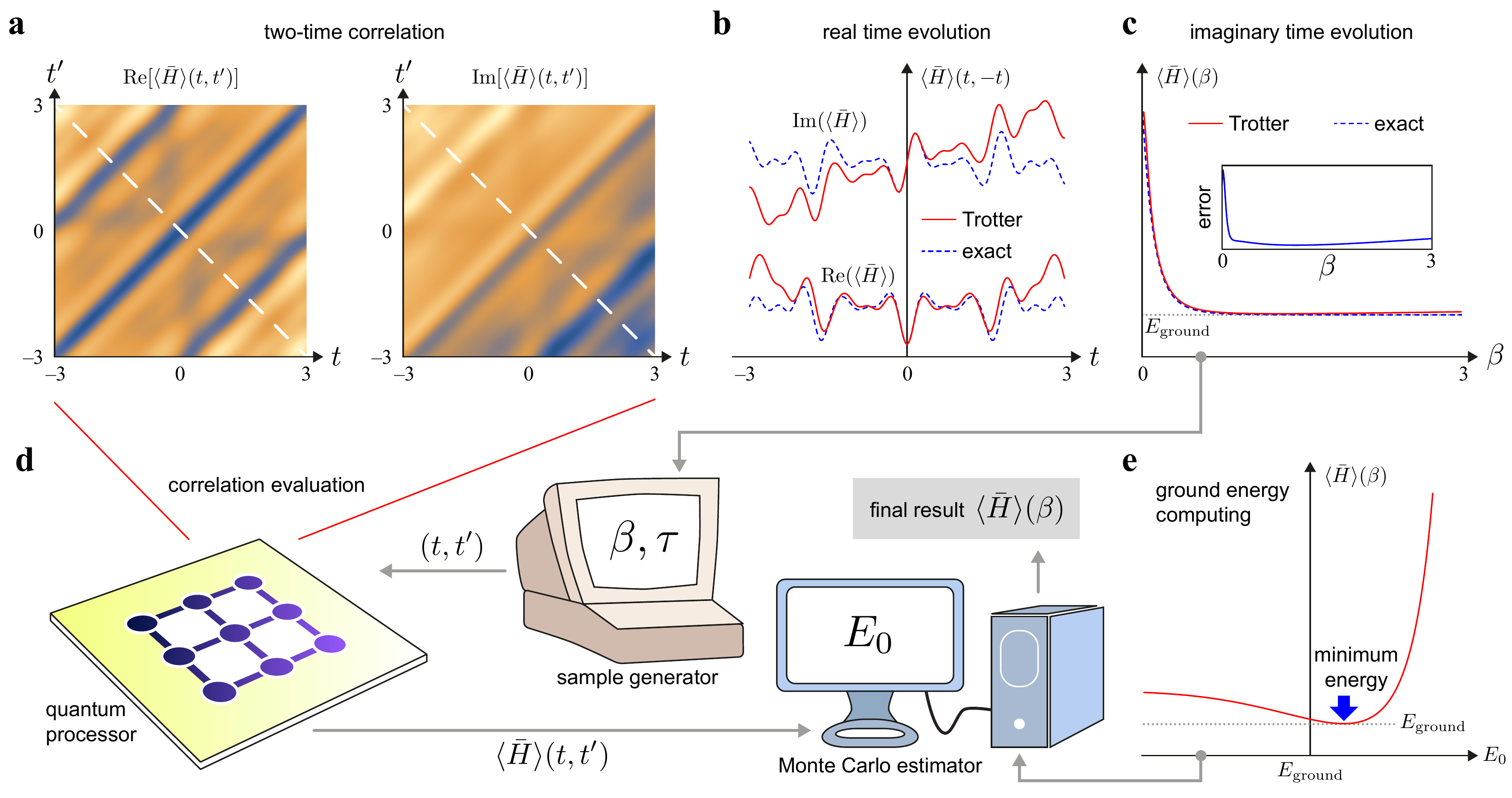}
\caption{
(a) Two-time correlation of $O = \bar{H}$ evaluated with Trotterisation. We take the ten-spin one-dimensional transverse-field Ising model with the parameter $\lambda = 1.2$ as an example to generate the data; see Appendix~\ref{app:details}. The Trotter step number is $N_t = 20$. 
(b) Two-time correlation along the dashed line in (a). Because of the small Trotter step number, the difference between the exact correlation and the Trotterisation result is significant. 
(c) Expected value of the Hamiltonian in the imaginary-time simulation. The expected value converges to the ground-state energy when $\beta$ is large. In comparison with the exact value, we can find that the Trotterisation result is accurate, although the Trotter step number is small. 
(d) Schematic of our Monte Carlo quantum simulation algorithm. The sample generator produces random times $(t,t')$ according to the distribution $P(t,t') = C^{-2}g(t)g(t')$. The quantum processor evaluates the two-time correlation $\mean{O}(t,t')$ ($O = \bar{H}$) using Trotterisation circuits. The Monte Carlo estimator computes the mean of two-time correlation $\mean{O}(-i\beta,i\beta) = C^2{\rm E}[\mean{O}(t,t')]$, which corresponds to the numerator in Eq.~(\ref{eq:Obeta}). We compute the denominator by taking $O = \openone$. 
(e) Expected value of the Hamiltonian as a function of $E_0$. The minimum value is taken as the result of the ground-state energy. 
}
\label{fig:scheme}
\end{center}
\end{figure*}

\section{Quantum-circuit Monte Carlo for imaginary-time simulation}

This section describes our quantum algorithm for simulating the imaginary-time evolution. We propose that one can realise the non-unitary imaginary-time evolution operator by randomly sampling real-time evolution operators, which are unitary and can be implemented with quantum gates. The complexity analysis shows that the required circuit depth scales with the permissible error $\eta$ as $O\left((\ln\eta)^2\right)$. 

Without loss of generality, we assume that the Hamiltonian $\bar{H}$ is a traceless operator. In the algorithm, we take $H = \bar{H} - E_0 \openone$, in which $E_0$ is a tunable constant. This constant does not change eigenstates and the time evolution. The ground-state energy of $\bar{H}$ is $E_g$. 

ITS is to evaluate 
\begin{eqnarray}
\mean{O}(\beta) = \frac{\bra{\Psi(0)}e^{-\beta H}Oe^{-\beta H}\ket{\Psi(0)}}{\bra{\Psi(0)}e^{-2\beta H}\ket{\Psi(0)}},
\label{eq:Obeta}
\end{eqnarray}
where $\beta$ is the imaginary time, $\ket{\Psi(0)}$ is the initial state, and $O$ is an observable. Because the time evolution operator $e^{-\beta H}$ is non-unitary, we cannot realise it directly with unitary circuits~\cite{Liu2021, Turro2022}. In our algorithm, we construct $e^{-\beta H}$ as an integral of real-time evolution operators. 

\subsection{Formalism}

We connect real- and imaginary-time evolution operators with an integral formula 
\begin{eqnarray}
e^{-\beta H} \simeq G(H) = \int_{-\infty}^{+\infty} dt g(t) e^{-iHt},
\label{eq:int}
\end{eqnarray}
where $g(t) = \frac{1}{\pi}\frac{\beta}{\beta^2+t^2}e^{-\frac{\beta^2 + t^2}{2\tau^2}}$ is a product of Lorentz and Gaussian functions (i.e.~a product of probability density functions of Cauchy-Lorentz and Gaussian  distributions, up to normalisation). If we neglect the Gaussian function (i.e.~take the limit $\tau\rightarrow\infty$), the integral results in the exact imaginary-time evolution operator $e^{-\beta H}$ when $H$ is positive semi-definite (see Appendix~\ref{app:Integral}); This is why we choose the Lorentz function. The Gaussian function is introduced to reduce the impact of Trotterisation errors. When the real-time evolution is realised with Trotterisation, usually the error increases with $\abs{t}$. Therefore, a small $g(t)$ at large $\abs{t}$ is preferred. The Gaussian function decreases exponentially with $\abs{t}$ and, with proper parameters, only slightly modifies the exact imaginary-time evolution operator as we show next. 

With the Gaussian function, the operator realised with the integral is 
\begin{eqnarray}
G(H) = \sum_{\eta=\pm}\frac{1}{2}e^{\eta\beta H}{\rm erfc}(\frac{\beta+\eta H\tau^2}{\sqrt{2}\tau}).
\end{eqnarray}
The error due to the Gaussian function has an upper bound $\norm{G(H)-e^{-\beta H}}_2 \leq \gamma_G = e^{-\frac{\Delta E^2\tau^2}{2}}$ when $\Delta E = E_g - E_0 \geq \frac{\beta}{\tau^2}$ (see Appendix~\ref{app:Integral}). Here $\norm{\bullet}_2$ denotes the matrix 2-norm. We can find that the error in $G(H)$ is small when we take proper $E_0$ and $\tau$. 

In our algorithm, we simulate the imaginary-time evolution by taking $G(H)$ as an approximation to $e^{-\beta H}$. Note that the error in the approximation can be arbitrarily small and is controlled by parameters $E_0$ and $\tau$. Substituting $G(H)$ for $e^{-\beta H}$ in Eq.~(\ref{eq:Obeta}), we obtain the approximation to $\mean{O}(\beta)$ in the form 
\begin{eqnarray}
\mean{O}_G(\beta) = \frac{\mean{O}_G(-i\beta,i\beta)}{\mean{\openone}_G(-i\beta,i\beta)},
\label{eq:OG}
\end{eqnarray}
where 
\begin{eqnarray}
\mean{O}_G(-i\beta,i\beta) &=& \bra{\Psi(0)}G(H)OG(H)\ket{\Psi(0)} = \int dtdt' g(t)g(t)' \mean{O}(t,t'),
\end{eqnarray}
and 
$\mean{O}(t,t') = \bra{\Psi(0)}e^{iHt'}Oe^{-iHt}\ket{\Psi(0)}$ is a real-time correlation (Take $O = \openone$ for the denominator). We use $\mean{O}(-i\beta,i\beta) = \bra{\Psi(0)}e^{-\beta H}Oe^{-\beta H}\ket{\Psi(0)}$ to denote the exact imaginary-time correlation. Given proper $E_0$ and $\tau$, $\mean{O}_G(\beta)$ is a good approximation to $\mean{O}(\beta)$ because $\norm{G(H)-e^{-\beta H}}_2$ is small. Therefore, we can compute $\mean{O}(\beta)$ from two-time correlations $\mean{O}(t,t')$ and $\mean{\openone}(t,t')$. The complete error analysis is given in Sec.~\ref{sec:CAITS}. 

\subsection{Algorithm of imaginary-time simulation}

In this section, we give the details of our ITS algorithm. As depicted in Fig.~\ref{fig:scheme}, the algorithm has three phases. First, we generate random values of the real time on a classical computer. Then, we evaluate real-time correlations on a quantum computer. Finally, with real-time correlations, we can evaluate imaginary-time correlations and the expected value of the observation on a classical computer. With the observable $O = \bar{H}$, the ITS algorithm can compute the ground-state energy, which will be discussed in Sec.~\ref{sec:GSS}. We describe each phase in what follows. 

We generate random values of the real time according to importance sampling. The probability density of $(t,t')$ is taken as $P(t,t') = C^{-2}g(t)g(t')$, where $C = \int dt g(t) = {\rm erfc}(\frac{\beta}{\sqrt{2}\tau})$ is the normalisation factor. Later, we will show that $C$ determines the variance of the Monte Carlo computing. With the distribution $P(t,t')$, we can rewrite an imaginary-time correlation as the expected value of the real-time correlation, i.e. 
\begin{eqnarray}
\mean{O}_G(-i\beta,i\beta) &=& C^2\int dt dt' P(t,t') \mean{O}(t,t').
\label{eq:OGexp}
\end{eqnarray}
Therefore, we can estimate $\mean{O}_G(-i\beta,i\beta)$ by computing the average over random $(t,t')$. 

We propose to evaluate real-time correlations $\mean{O}(t,t')$ with the Hadamard test. A general-purpose Hadamard-test circuit requires an ancillary qubit in addition to qubits for encoding the simulated system~\cite{Ekert2002}. To measure $\mean{O}(t,t')$, we need to implement a controlled-$U$ gate, in which the ancillary qubit is the control qubit, and $U = e^{iHt'}Oe^{-iHt}$ (suppose $O$ is unitary). The controlled gate complicates the circuit, which is an undesired feature for applications on a NISQ system. For certain models, such as fermion models with particle number conservation, the ancillary qubit can be removed, and the circuit can be simplified accordingly~\cite{Lu2021, OBrien2021}. In this work, we analyse our algorithm focusing on the general-purpose circuit with an ancillary qubit. 

For a general observable $O$, we decompose it as a linear combination of Hermitian unitary operators, i.e.~$O = \sum_j a_j O_j$. Here, $a_j$ are real coefficients, and $O_j$ are Hermitian unitary operators, e.g.~Pauli operators. We can evaluate each $O_j$ with quantum circuits given in Appendix~\ref{app:circuit}. Real-time correlations have real and imaginary parts, which are measured with different circuits. For each circuit shot (implementation of the circuit and measurement for one time), the measurement outcome is a number $\mu_R$ or $\mu_I$ ($\mu_R,\mu_I=\pm 1$) corresponding to real and imaginary parts, respectively. The correlation is the expected value of measurement outcomes from corresponding circuits, i.e.~$\overline{\mean{O_j}}(t,t') = {\rm E}[\mu_R] + i{\rm E}[\mu_I]$. We remark that the correlation depends on the parameter $E_0$. The overline denotes that the correlation is computed by taking $E_0 = 0$, i.e.~$\overline{\mean{O}}(t,t') = \bra{\Psi(0)}e^{i\bar{H}t'}Oe^{-i\bar{H}t}\ket{\Psi(0)}$. Correlations with any $E_0$ can be derived from $E_0 = 0$ according to $\mean{O}(t,t') = e^{iE_0(t-t')}\overline{\mean{O}}(t,t')$. Therefore, only correlations with $E_0 = 0$ are evaluated on the quantum computer, and $E_0$ is an input parameter in the Monte Carlo estimator stage after the quantum computing, as shown in Fig.~\ref{fig:scheme}. 

With real-time correlations, we can compute imaginary-time correlations and further compute $\mean{O}_G(\beta)$ according to Eqs.~(\ref{eq:OGexp}) and (\ref{eq:OG}), respectively. The detailed pseudocode of imaginary-time simulation is given in Algorithms~\ref{alg:ITC}~and~\ref{alg:ITS}. In the algorithms, $N_s$ denotes the number of random $(t,t')$, and $M_s$ denotes the number of circuit shots for each of real and imaginary parts for each $(t,t')$. Then, the total number of circuit shots is $2N_sM_s$ for each of $\mean{O}_G(-i\beta,i\beta)$ and $\mean{\openone}_G(-i\beta,i\beta)$. We remark that we can take $M_s = 1$ [i.e.~two circuit shots for each $(t,t')$], and in this case, the estimator of $\mean{O}_G(-i\beta,i\beta)$ still converges to its true value in the limit of large $N_s$. In the error analysis, we will focus on $M_s = 1$, and the total number of circuit shots is $4N_s$. 

\begin{figure*}
\begin{minipage}{\linewidth}
\begin{algorithm}[H]
{\small
\begin{algorithmic}[1]
\caption{{\small Imaginary-time correlation.}}
\label{alg:ITC}
\Statex
\State Input $\bar{H},O,E_0,\beta,\tau,N_s,M_s$. 
\For{$l=1$ to $N_s$}
\State Generate $(t_l,t_l')$ with the probability density $P(t_l,t_l') = C^{-2}g(t_l)g(t_l')$. 
\State Generate $j_l$ with the probability $a_{O}^{-1}\abs{a_{j_l}}$. 
\Comment $O = \sum_j a_j O_j$ and $a_{O} = \sum_j\abs{a_j}$
\State Implement the circuit for $M_s$ shots to evaluate the real part of $\overline{\mean{O_{j_l}}}(t_l,t_l')$, and record measurement outcomes $\{\mu_{R,l,k}\st k=1,\ldots,M_s\}$. 
\State Implement the circuit for $M_s$ shots to evaluate the imaginary part of $\overline{\mean{O_{j_l}}}(t_l,t_l')$, and record measurement outcomes $\{\mu_{I,l,k}\st k=1,\ldots,M_s\}$. 
\EndFor
\State Output $\hat{O} \leftarrow \frac{a_{O}C^2}{N_sM_s}\sum_{l=1}^{N_s}\sum_{k=1}^{M_s}{\rm sgn}(a_{j_l})e^{iE_0(t_l-t_l')}(\mu_{R,l,k}+i\mu_{I,l,k})$ as the estimate of $\mean{O}_G(-i\beta,i\beta)$. 
\end{algorithmic}
}
\end{algorithm}
\end{minipage}
\end{figure*}

\begin{figure}
\begin{minipage}{\linewidth}
\begin{algorithm}[H]
{\small
\begin{algorithmic}[1]
\caption{{\small imaginary-time simulation.}}
\label{alg:ITS}
\Statex
\State Input $\bar{H},O,E_0,\beta,\tau,N_s,M_s$. 
\State Compute $\hat{O}$ according to Algorithm~\ref{alg:ITC} or Algorithm~\ref{alg:ITCqcmc}. 
\State Compute $\hat{\openone}$ according to Algorithm~\ref{alg:ITC} or Algorithm~\ref{alg:ITCqcmc}. 
\State Output $\frac{\hat{O}}{\hat{\openone}}$ as the estimate of $\mean{O}_G(\beta)$. 
\end{algorithmic}
}
\end{algorithm}
\end{minipage}
\end{figure}

\subsection{Variance and sign problem}
\label{sec:VSP}

A potential issue in the Monte Carlo algorithm is the statistical error. In this section, we show that the factor $C$ determines the variance in evaluating imaginary-time correlations. Because $C\leq 1$, the variance is under control. Overall, the time cost of our algorithm is a polynomial function of the system size, evolution time and accuracy as shown in the full complexity analysis in Sec.~\ref{sec:CAITS}. Despite this, we also analyse the sign problem through the average phase in this section. The sign problem is the problem of numerically evaluating the integral of a highly oscillatory function. Therefore, an average phase approaching zero indicates the sign problem. We show that the average phase is always finite under the same assumption as in QPE, i.e.~the initial state has a finite overlap with the true ground state. 

\subsubsection{Variance of correlation estimators}
\label{sec:variance}

The variance depends on the factor $C$. We consider the variance of $\hat{O}$ in Algorithm~\ref{alg:ITC}, which is the estimator of $\mean{O}_G(-i\beta,i\beta)$. Because $O$ is Hermitian, $\mean{O}_G(-i\beta,i\beta)$ is always real. According to Eq.~(\ref{eq:OGexp}) and Algorithm~\ref{alg:ITC} ($M_s=1$), $\mean{O}_G(-i\beta,i\beta) = a_OC^2{\rm E}[\mu]$, where $\mu = \Re\left[e^{i\theta}(\mu_R+i\mu_I)\right]$, and $e^{i\theta}$ corresponds to the phase factor ${\rm sgn}(a_j)e^{iE_0(t-t')}$. Measurement outcomes take $\mu_R,\mu_I = \pm 1$, see Appendix~\ref{app:circuit}. Therefore, $\absLR{\mu} \leq \sqrt{2}$. The variance of the estimator is 
\begin{eqnarray}
{\rm Var}_{\hat{O}} = a_O^2C^4\frac{{\rm E}[\mu^2]-{\rm E}[\mu]^2}{N_s} \leq \frac{2a_O^2C^4}{N_s}. 
\end{eqnarray}
The parameter $a_O$ is defined in Algorithm~\ref{alg:ITC}, and $a_{\openone} = 1$ when $O = \openone$. 

We remark that the variance is amplified when using the zeroth-order leading-order rotation formula~\cite{Yang2021} to implement the real-time evolution operator (see Sec.~\ref{sec:LOR}), which will be analysed in Sec.~\ref{sec:CAITS}. 

\subsubsection{Sign problem}
\label{sec:sign}

The sign problem refers to the problem that the Monte Carlo summation is taken over oscillatory values. If the values are complex, this problem is also called the phase problem. When the sign problem occurs, it usually becomes difficult to achieve the desired accuracy due to statistical errors. We can directly analyse the statistical error in $\mean{O}(\beta)$, which will be given in Sec.~\ref{sec:CAITS}. Here, we discuss the sign problem through the average phase of values in the summation. We will show that the average phase is always finite, i.e.~the oscillation is not severe. 

We focus on the denominator $\mean{\openone}_G(-i\beta,i\beta) = C^2{\rm E}[e^{i\theta}(\mu_R+i\mu_I)]$. The average phase is ${\rm E}[e^{i\varphi}]$, where $\varphi \equiv {\rm arg}\left[e^{i\theta}(\mu_R+i\mu_I)\right]$. Because the magnitude of $e^{i\theta}(\mu_R+i\mu_I)$ is always $\sqrt{2}$, we have 
\begin{eqnarray}
{\rm E}[e^{i\varphi}] &=& \frac{1}{\sqrt{2}}{\rm E}[e^{i\theta}(\mu_R+i\mu_I)] = \frac{\mean{\openone}_G(-i\beta,i\beta)}{\sqrt{2}C^2}.
\end{eqnarray}

Now, we consider the case that the imaginary-time evolution operator is accurate: We take a proper $E_0$ and large $\tau$ such that $\norm{G(H)-e^{-\beta H}}_2$ is small. In this case, $\mean{\openone}_G(-i\beta,i\beta) \simeq \mean{\openone}(-i\beta,i\beta) \geq e^{-2\beta(E_g-E_0)}p_g$, where  $p_g = \abs{\braket{\Psi_g}{\Psi(0)}}^2$ is the probability of the ground state in the initial state. Then, the average phase becomes 
\begin{eqnarray}
{\rm E}[e^{i\varphi}] \gtrsim \frac{e^{-2\beta(E_g-E_0)}p_g}{\sqrt{2}C^2}.
\end{eqnarray}
According to this result, we prefer to take $E_0$ close to $E_g$. Because $\norm{G(H)-e^{-\beta H}}_2$ decreases exponentially with $(E_g-E_0)^2\tau^2$, it is allowed to take an $E_0$ close to $E_g$ when $\tau$ is large. We also prefer a small $C$, which coincides with the analysis of the variance. With proper parameters, we can have ${\rm E}[e^{i\varphi}] \gtrsim \frac{p_g}{\sqrt{2}}$. Therefore, as long as $p_g$ is finite, i.e.~there is a finite overlap between the initial state and ground state, the average phase is finite. This assumption of finite overlap is generally required in projector QMC algorithms~\cite{Haaf1995, Motta2018}, in which the imaginary-time evolution operator projects the initial state onto the ground state; and it is the same in quantum algorithms that find the ground state by a projection, e.g.~the QPE algorithm~\cite{Abrams1999, AspuruGuzik2005, Nielsen2012}. 

To understand how the sign oscillation is controlled in our algorithm, we express the initial state in the form $\ket{\Psi(0)} = \sqrt{p_g}\ket{\psi_g} + \sqrt{1-p_g}\ket{\psi_e}$. Here, $\ket{\psi_g}$ is the ground-state, and $\ket{\psi_e}$ denotes the excited-state component in the initial state. In our algorithm, we compute the imaginary-time evolution as an integral of the real-time evolution. In real-time evolution, the state is $e^{-iHt}\ket{\Psi(0)} = \sqrt{p_g}e^{-i(E_g-E_0)t}\ket{\psi_g} + \sqrt{1-p_g}e^{-iHt}\ket{\psi_e}$. When $E_0$ is close to $E_g$, the phase of the ground-state term varies slowly with time. This term results in the finite phase average. 

\subsection{Real-time simulation subroutine}

In this section, we specify the way of implementing the real-time evolution operator. Although our algorithm works for any RTS (i.e.~Hamiltonian simulation) algorithm~\cite{Lloyd1996, Berry2006, Wiebe2010, Childs2012, Berry2015, Low2017, Campbell2019, Childs2019, Faehrmann2021, Meister2022}, we consider the first-order Trotter formula~\cite{Lloyd1996} and zeroth-order LOR formula~\cite{Yang2021} in details in this paper. They represent two different ways of dealing with errors in RTS. In the LOR formula, the real-time evolution operator is exact at the cost of a variance increasing with the simulated real time. In this case, we apply a truncation on the real time to control the overall variance. We focus on the LOR formula in the complexity analysis, which results in a circuit depth increasing polylogarithmically with the desired accuracy. In the Trotter formula, the real-time evolution operator is inexact, and the error increases polynomially with the real time. In this case, the truncation is unnecessary. In Sec.~\ref{sec:resilience}, we will show that our algorithm is resilient to the Trotterisation error in real-time evolution operator. 

\subsubsection{First-order Trotter formula and ordering operation}

We consider a Hamiltonian in the form $\bar{H} = \sum_{j=1}^M H_j$, where $H_j$ are Hermitian operators. The first-order Trotter formula reads 
\begin{eqnarray}
S_1\left(t\right) = e^{-iH_M t}\cdots e^{-iH_2 t} e^{-iH_1 t}.
\label{eq:St}
\end{eqnarray}
By cutting the evolution time into $N_t$ slices (i.e.~Trotter steps), we use $\tilde{U}(t) \equiv \left[S_1(\Delta t)\right]^{N_t}$ to approximate $e^{-i\bar{H}t}$, where $\Delta t = t/N_t$. The error in this approximation scales as $O(t^2/N_t)$. 

To analyse the impact of the Trotterisation error in our ITS algorithm, we introduce the ordering operation to express the first-order Trotter formula.  We define operators $H_{kM+j} = H_{j}$ for integers $j\in [1,M]$ and $k\in [0,N_t-1]$, where $N_t$ is the number of Trotter steps: $H_{kM+j}$ is the $H_{j}$ operator in the $(k+1)$th Trotter step. We use $\calP$ to denote the ordering operation according to the label of operators $H_j$ ($j=1,2,\ldots,N_tM$): For any product of $H_j$ operators, the ordering operation is defined as 
\begin{eqnarray}
\calP\left\{ H_{j_1} H_{j_2} \cdots H_{j_K} \right\} \equiv H_{N_tM}^{k_{N_tM}}\cdots H_2^{k_2} H_1^{k_1},
\end{eqnarray}
where $K$ is the total number of operators in the product, and $k_j = \sum_{i=1}^K \delta_{j,j_i}$ is the number of the $H_j$ operator in the product. 

To define $\calP$ for a general function of $H_j$ operators, we consider the spectral decomposition of each $H_j$, i.e.~$H_j = \sum_{n_j} \omega_{j,n_j} \Pi_{j,n_j}$, where $\Pi_{j,n_j}$ is the orthogonal projection onto eigenvector of $H_j$ with the eigenvalue $\omega_{j,n_j}$. We can find that 
\begin{eqnarray}
&& H_{N_tM}^{k_{N_tM}}\cdots H_2^{k_2} H_1^{k_1} \notag \\
&=& \sum_{n_1,n_2,\ldots ,n_{N_tM}} \omega_{N_tM,n_{N_tM}}^{k_{N_tM}}\cdots \omega_{2,n_2}^{k_2} \omega_{1,n_1}^{k_1} \Pi_{N_tM,n_{N_tM}}\cdots \Pi_{2,n_2} \Pi_{1,n_1}.
\end{eqnarray}
Therefore, for a general function, we define the ordering operation as 
\begin{eqnarray}
&& \calP\left\{ f(H_1, H_2, \ldots, H_{N_tM}) \right\} \notag \\
&\equiv & \sum_{n_1,n_2,\ldots ,n_{N_tM}} f(\omega_{1,n_1}, \omega_{2,n_2}, \ldots, \omega_{N_tM,n_{N_tM}}) \Pi_{N_TM,n_{N_tM}}\cdots \Pi_{2,n_2} \Pi_{1,n_1}.
\end{eqnarray}
Note that for an analytic $f$, we can also read $\calP\left\{ f(H_1, H_2, \ldots, H_{N_tM}) \right\}$ as applying the ordering operation on each term in the Taylor expansion of $f(H_1, H_2, \ldots, H_{N_tM})$. 

Using the ordering operation, we can reexpress the first-order Trotter formula as 
\begin{eqnarray}
\tilde{U}(t) &=& e^{-iH_{N_tM} \Delta t}\cdots e^{-iH_2 \Delta t} e^{-iH_1 \Delta t} = \calP\left\{e^{-iH_{sum}t}\right\},
\end{eqnarray}
where 
\begin{eqnarray}
H_{sum} \equiv \frac{1}{N_t} \sum_{j=1}^{N_tM} H_j.
\end{eqnarray}

\subsubsection{Zeroth-order leading-order-rotation formula}
\label{sec:LOR}

Instead of approximating the real-time evolution operator with a product of unitary operators, we can also approximate it with a summation of unitary operators~\cite{Childs2012, Berry2015}. We can even reproduce the exact real-time evolution operator with a summation formula including infinite terms, which can be implemented with the Monte Carlo method. Here, we take the zeroth-order LOR formula~\cite{Yang2021} as an example. 

We suppose that each term in the Hamiltonian is a Pauli operator, i.e.~$\bar{H} = \sum_{j=1}^M h_j\sigma_j$. Here, $h_j$ are real parameters, and $\sigma_j$ are Pauli operators. The zeroth-order LOR formula reads 
\begin{eqnarray}
e^{-i\bar{H}t} &=& \sum_{j=1}^M b_j(t) e^{-i {\rm sgn}(h_j t) \phi(t) \sigma_j} + \sum_{k=2}^{\infty} \sum_{j_1,\ldots ,j_k = 1}^M \frac{\prod_{a=1}^k \left(-ih_{j_a} t\right)}{k!} \sigma_{j_k}\cdots \sigma_{j_1},
\end{eqnarray}
where $\phi(t) = \arctan C_L(t)$, $b_j = \abs{h_j t}/\sin\phi(t)$, $C_L(t) = \sum_j \abs{h_j t} = h_{tot}\abs{t}$ and $h_{tot} = \sum_j \abs{h_j}$. We use $s$ to denote terms in the summation formula. Each term is a rotation operator or Pauli operator. We define unitary operators 
\begin{eqnarray}
U_s(t) &=& \left\{\begin{array}{ll}
e^{-i {\rm sgn}(h_j t) \phi(t) \sigma_j} & \text{if }s=j; \\
\sigma_{j_k}\cdots \sigma_{j_1} & \text{if }s=(j_1,\ldots ,j_k).
\end{array}\right.
\label{eq:Ust}
\end{eqnarray}
We define weights and phases as 
\begin{eqnarray}
w_s(t)e^{i\theta_s(t)} &=& \left\{\begin{array}{ll}
b_j(t) & \text{if }s=j; \\
\frac{1}{k!}\prod_{a=1}^k \left(-ih_{j_a} t\right) & \text{if }s=(j_1,\ldots ,j_k),
\end{array}\right.
\end{eqnarray}
where weights $w_s(t)$ are positive, and phases $\theta_s(t)$ are real. Then the formula can be rewritten as 
\begin{eqnarray}
e^{-i\bar{H}t} &=& \sum_s  w_s(t) e^{i\theta_s(t)} U_s(t).
\end{eqnarray}

The zeroth-order LOR formula is exact and free of the Trotterisation error. However, when we realise such a formula using the Monte Carlo method, the variance is amplified by a factor of $C_A(t)^2$, and $C_A(t) = \sum_s w_s(t) = \sqrt{1+C_L^2(t)} + e^{h_{tot} \abs{t}} - (1 + h_{tot} \abs{t}) = 1+O(h_{tot}^2\abs{t}^2)$. To reduce the factor, we divide the time $t$ into $N_t$ Trotter steps, and the formula becomes 
\begin{eqnarray}
e^{-i\bar{H}t} &=& \left[\sum_s  w_s(\Delta t) e^{i\theta_s(\Delta t)} U_s(\Delta t)\right]^{N_t} = \sum_{\bfs}  w_{\bfs}(t) e^{i\theta_{\bfs}(t)} U_{\bfs}(t),
\end{eqnarray}
where $\bfs=(s_1,s_2,\ldots,s_{N_t})$, $U_{\bfs}(t) = U_{s_{N_t}}(\Delta t) \cdots U_{s_2}(\Delta t) U_{s_1}(\Delta t)$ and $w_{\bfs}(t)e^{i\theta_{\bfs}(t)} = \prod_{l=1}^{N_t} w_{s_l}(\Delta t)e^{i\theta_{s_l}(\Delta t)}$. Then the variance is amplified by a factor of $C_{A}(\Delta t)^{2N_t}$, where $C_{A}(\Delta t)^{N_t} = \sum_{\bfs} w_{\bfs}(t) = 1+O\left(\frac{h_{tot}^2\abs{t}^2}{N_t}\right)$, i.e.~we can reduce the variance by taking a large $N_t$. 

We can find the connection between the zeroth-order LOR formula and Trotterisation by neglecting Pauli-operator terms in the formula. We have 
\begin{eqnarray}
\left[\sum_{j=1}^M b_j(\Delta t) e^{-i {\rm sgn}(h_j \Delta t) \phi(\Delta t) \sigma_j}\right]^{N_t} &=& \left(\openone - iH\Delta t\right)^{N_t} \simeq e^{-i\bar{H}t}.
\end{eqnarray}
With only rotation terms, the formula is a summation of stochastic Trotterisation products. The role of Pauli operators in the LOR formula is correcting errors in the summation of stochastic Trotterisation products, which is at the cost of increased variance. 

\subsection{Complexity analysis of imaginary-time simulation}
\label{sec:CAITS}

In this section, we present a complete complexity analysis of our algorithm, which is possible because the imaginary-time evolution operator is constructed according to an explicit integral formula. In comparison, there are algorithms that approximate the imaginary-time evolution operator with a unitary operator utilising variational circuits~\cite{McArdle2019, Motta2019, Lin2021}. With the explicit formula, we can avoid an algorithmic accuracy depending on the variational ansatz. Compared with the quantum-classical Monte Carlo algorithm, in which one simulates the imaginary-time evolution with the classical auxiliary-field Monte Carlo~\cite{Motta2018} incorporating a quantum trial state~\cite{Huggins2022}, our algorithm is free of the sign problem even without introducing the phaseless approximation and can approach the unbiased solution in a systematic way. As summarised in Theorem~\ref{th:ITS} (also see Theorem~\ref{th:ITSGSS}), the conclusion is that our algorithm requires resources scaling polynomially or polylogarithmically with the system size (assuming local interactions), imaginary time and desired accuracy. 

In Algorithm~\ref{alg:ITC}, we have assumed that one can implement the exact real-time evolution operator directly by unitary gates, which corresponds to Trotterisation in the $N_t\rightarrow\infty$ limit. To optimise the performance of our algorithm in the complexity analysis, in this section we focus on the zeroth-order LOR formula. Accordingly, the algorithm has to be adapted, see  Algorithm~\ref{alg:ITCqcmc}. We will explicitly give $C_T$ in the algorithm after introducing the truncation on the real time. The circuit for measuring $\bra{\Psi(0)}U_{\bfs'}(t')^\dag O_j U_{\bfs}(t)\ket{\Psi(0)}$ is given in Appendix~\ref{app:circuit}. 

\begin{figure*}
\begin{minipage}{\linewidth}
\begin{algorithm}[H]
{\small
\begin{algorithmic}[1]
\caption{{\small Imaginary-time correlation with the zeroth-order leading-order-rotation formula.}}
\label{alg:ITCqcmc}
\Statex
\State Input $\bar{H},O,E_0,\beta,\tau,N_s,M_s,T,N_t$. 
\For{$l=1$ to $N_s$}
\State Generate $(t_l,\bfs_l,t_l',\bfs_l')$ with the probability density $P_{\bfs_l,\bfs_l'}(t_l,t_l') = C_T^{-2}g(t_l)g(t_l') w_{\bfs_l}(t_l) w_{\bfs_l'}(t_l')$. 
\State Generate $j_l$ with the probability $a_{O}^{-1}\abs{a_{j_l}}$. 
\Comment $O = \sum_j a_j O_j$ and $a_{O} = \sum_j\abs{a_j}$
\State Implement the circuit for $M_s$ shots to evaluate the real part of $\bra{\Psi(0)}U_{\bfs_l'}(t_l')^\dag O_{j_l} U_{\bfs_l}(t_l)\ket{\Psi(0)}$, and record measurement outcomes $\{\mu_{R,l,k}\st k=1,\ldots,M_s\}$. 
\State Implement the circuit for $M_s$ shots to evaluate the imaginary part of $\bra{\Psi(0)}U_{\bfs_l'}(t_l')^\dag O_{j_l} U_{\bfs_l}(t_l)\ket{\Psi(0)}$, and record measurement outcomes $\{\mu_{I,l,k}\st k=1,\ldots,M_s\}$. 
\EndFor
\State Output $\hat{O} = \frac{a_{O}C_T^2}{N_sM_s}\sum_{l=1}^{N_s}\sum_{k=1}^{M_s}{\rm sgn}(a_{j_l})e^{iE_0(t_l-t_l')} e^{i\theta_{\bfs_l}(t_l)} e^{-i\theta_{\bfs_l'}(t_l')} (\mu_{R,l,k}+i\mu_{I,l,k})$ as the estimate of $\mean{O}_{G_T}(-i\beta,i\beta)$. 
\end{algorithmic}
}
\end{algorithm}
\end{minipage}
\end{figure*}

\subsubsection{Truncation error}

If we take the LOR formula to realise the real-time evolution operator, the variance is amplified by a factor of $C_{A}(\Delta t)^{2N_t}$, which increases with $\abs{t}$. In order to bound the variance, we apply a truncation at $t = \pm T$ in the integral. Given $N_t$, the integral formula with truncation becomes 
\begin{eqnarray}
G_T(H) &=& \int_{-T}^{T} dt g(t) e^{-iH t} = \int_{-T}^{T} dt g(t) e^{iE_0 t} \sum_{\bfs} w_{\bfs}(t) e^{i\theta_{\bfs}(t)} U_{\bfs}(t).
\end{eqnarray}
Replacing $G(H)$ with $G_T(H)$, the approximation to $\mean{O}_G(-i\beta,i\beta)$ with truncation reads 
\begin{eqnarray}
&& \mean{O}_{G_T}(-i\beta,i\beta) = \bra{\Psi(0)}G_T(H)OG_T(H)\ket{\Psi(0)} \notag \\
&=& C_T^2\int_{-T}^{T} dt dt' \sum_{\bfs,\bfs'} P_{\bfs,\bfs'}(t,t') e^{i\theta_{\bfs}(t)} e^{-i\theta_{\bfs'}(t')} \bra{\Psi(0)}U_{\bfs'}(t')^\dag OU_{\bfs}(t)\ket{\Psi(0)},
\label{eq:OGTexp}
\end{eqnarray}
where 
\begin{eqnarray}
C_T = \int_{-T}^{T} dt g(t) C_{A}(t/N_t)^{N_t},
\end{eqnarray}
and $P_{\bfs,\bfs'}(t,t') = C_T^{-2}g(t)g(t') w_{\bfs}(t) w_{\bfs'}(t')$. We evaluate $\bra{\Psi(0)}G_T(H)OG_T(H)\ket{\Psi(0)}$ according to Algorithm~\ref{alg:ITCqcmc}. 

The truncation error decreases exponentially with $T$ because of the Gaussian function in $g(t)$. Using properties of the Gaussian function and Lorentz function, we have 
\begin{eqnarray}
&& \normLR{G_T(H)-G(H)}_2 = \normLR{ \int_{-\infty}^{-T} dt g(t) e^{-iH t} + \int_{T}^{\infty} dt g(t) e^{-iH t} }_2 \notag \\
&\leq & 2 \int_{T}^{\infty} dt g(t) \leq 2 \int_{T}^{\infty} dt \frac{1}{\pi\beta}e^{-\frac{t^2}{2\tau^2}} = \frac{\sqrt{2}\tau}{\sqrt{\pi}\beta} {\rm erf}\left(\frac{T}{\sqrt{2}\tau}\right) \leq \gamma_T,
\end{eqnarray}
and 
\begin{eqnarray}
\gamma_T = \frac{\sqrt{2}\tau}{\sqrt{\pi}\beta} e^{-\frac{T^2}{2\tau^2}}
\end{eqnarray}
is the upper bound of the truncation error. 

The truncation is a universal way to bound the impact of errors in the real-time evolution. Suppose a method, e.g.~the first-order Trotter formula, realises the unitary operator $U(t)$ as an approximation to $e^{-iHt}$, and $\epsilon(t)$ is the upper bound of $\norm{U(t)-e^{-iHt}}_2$. The error in $G(H)$ is upper bounded by $\epsilon(T)+\gamma_T$, because the integral of $g(t)$ is not larger than one. Therefore, as long as the circuit complexity for controlling the error $\epsilon(T)$ in the real-time evolution is a polynomial function of $T$, the circuit complexity for imaginary-time evolution is also polynomial. Specifically for the LOR formula, the real-time evolution operator is exact, therefore the contribution of the $\epsilon(T)$ term is zero; The cost is an increased variance, which we will discuss next. 

\subsubsection{Statistical error}

According to Algorithm~\ref{alg:ITCqcmc} ($M_s=1$), the variance of $\hat{O}$ has the upper bound 
\begin{eqnarray}
{\rm Var}_{\hat{O}} \leq \frac{2a_{O}^2 C_T^4}{N_s}.
\end{eqnarray}
The derivation of the upper bound is similar to evaluating Eq.~(\ref{eq:OGexp}) according to Algorithm~\ref{alg:ITS}, see Sec.~\ref{sec:variance}. 

Now we work out an upper bound of $C_T$. Because $C_A(t)$ increases monotonically with $\abs{t}$, we have 
\begin{eqnarray}
C_T &\leq & C_{A}(T/N_t)^{N_t} \int_{-T}^{T} dt g(t) \leq C_{A}(T/N_t)^{N_t} = 1 + O\left(\frac{h_{tot}^2T^2}{N_t}\right).
\end{eqnarray}
Here, we have used that the integral of $g(t)$ is not larger than one. The upper bound of $C_T$ approaches one when $N_t$ is large, therefore, we can control the variance by taking a large $N_t$. 

The statistical error in the correlation estimator is $e_{O} = \absLR{\hat{O} - \mean{O}_{G_T}(-i\beta,i\beta)}$. According to Chebyshev's inequality, the probability that the error $e_{O}$ is not smaller than $a_{O}\delta$ has an upper bound 
\begin{eqnarray}
P(e_{O} \geq a_{O}\delta) &\leq & \frac{{\rm Var}_{\hat{O}}}{a_{O}^2\delta^2} \leq \frac{2C_T^4}{N_s\delta^2}.
\label{eq:Pe}
\end{eqnarray}
In the rigorous complexity analysis, we will use Chebyshev's inequality. However, it is worth noting that if we approximate the distribution of $\hat{O}$ with the normal distribution according to the central limit theorem, the error $e_{O}$ is not smaller than $a_{O}\delta$ with the probability 
\begin{eqnarray}
P(e_{O} \geq a_{O}\delta) &\simeq & {\rm erfc}\left(\frac{a_{O}\delta}{\sqrt{2{\rm Var}_{\hat{O}}}}\right) \leq e^{-\frac{a_{O}^2\delta^2}{2{\rm Var}_{\hat{O}}}} \leq e^{-\frac{\delta^2 N_s}{4C_T^4}}.
\end{eqnarray}

The computation fails when the statistical error in the denominator of $\mean{O}(\beta)$ is not smaller than $\delta$ or the statistical error in the numerator of $\mean{O}(\beta)$ is not smaller than $a_{O}\delta$. The failure probability has the upper bound 
\begin{eqnarray}
P_{\delta} &=& 1-[1-P(e_{\openone} \geq \delta)][1-P(e_{O} \geq a_{O}\delta)] \notag \\
&\leq & P(e_{\openone} \geq \delta) + P(e_{O} \geq a_{O}\delta) \leq \frac{4C_T^4}{N_s\delta^2}.
\label{eq:Pd}
\end{eqnarray}
This probability determines the sampling cost of our ITS algorithm. 

\subsubsection{Circuit depth and sampling cost}

There are three error sources in our ITS algorithm: the error due to the Gaussian function $\gamma_G$, the truncation error $\gamma_T$ and the statistical error determined by $C_T$. The bound $\gamma_G$ holds under the condition $E_g-E_0\geq \frac{\beta}{\tau^2}$, therefore, we need to take an $E_0$ smaller than the actual ground-state energy. Because the energy is evaluated according to Eq.~(\ref{eq:Obeta}), all errors are amplified when the denominator $\bra{\Psi(0)}e^{-2\beta H}\ket{\Psi(0)}$ is small. The denominator has the lower bound $\bra{\Psi(0)}e^{-2\beta H}\ket{\Psi(0)} \leq p_ge^{-2(E_g-E_0)\beta}$, therefore, we prefer an $E_0$ close to $E_g$. In summary, we first need an estimate of the ground-state energy and then take $E_0$ accordingly. There are two cases. First, we have a sufficiently accurate estimate of $E_g$ obtained from other algorithms, such as the Hartree-Fock method or QMC on a classical computer. In this case, we can directly use it in our ITS algorithm. Second, if we do not have a sufficiently accurate estimate of $E_g$, we can start with any preliminary estimate and iteratively improve the accuracy of the ground-state energy with our ITS algorithm, which will be given in Sec.~\ref{sec:iterativeGSS}. Here, we simply assume that there is an estimate of the ground-state energy $\hat{E}_g$ with the uncertainty $\delta E$, i.e.~$E_g\in[\hat{E}_g+\delta E,\hat{E}_g-\delta E]$. We also assume that we have a lower bound of the ground-state probability $p_b$, i.e.~$p_g \geq p_b$. 

Let $\eta$ be the total error. In the algorithm, we take $\tau\sim \sqrt{\ln\frac{1}{\eta}}$ because $\gamma_G$ decreases exponentially with $\tau^2$. Similarly, we take $T\sim \ln\frac{1}{\eta}$ because $\gamma_T$ decreases exponentially with $T^2/\tau^2$. To control the variance, we need to take $N_t \sim T^2\sim \left(\ln\frac{1}{\eta}\right)^2$. The circuit depth is proportional to the number of Trotter steps $N_t$, therefore, the circuit depth increases polylogarithmically with the desired accuracy. The costs of our ITS algorithm are summarised in Theorem~\ref{th:ITS}. The proof is in Appendix~\ref{app:proofITS}. See Theorem~\ref{th:ITSGSS} for the case without a prior estimate of the ground-state energy. 

\begin{theorem}
Let $\eta$ and $\kappa$ be any positive numbers. Suppose that $\hat{E}_g$, $\delta E$ and $p_b$ satisfy conditions $E_g\in[\hat{E}_g+\delta E,\hat{E}_g-\delta E]$ and $p_g \geq p_b$. The result of $\mean{O}_G(\beta)$ from the ITS algorithm with the zeroth-order LOR formula is $\frac{\hat{O}}{\hat{\openone}}$. If we take proper parameters in the algorithm, the inequality 
\begin{eqnarray}
\absLR{\frac{\hat{O}}{\hat{\openone}} - \mean{O}(\beta)} < a_O\eta
\end{eqnarray}
holds with a probability higher than $1-\kappa$, with the Trotter step number 
\begin{eqnarray}
N_t = O\left(h_{tot}^2\beta^2 \left(\ln\frac{1}{\epsilon}\right)^2\right)
\label{eq:NT}
\end{eqnarray}
and sample size 
\begin{eqnarray}
N_s = O\left(\frac{1}{\kappa\epsilon^2}\right),
\label{eq:Ns}
\end{eqnarray}
where $\epsilon = O\left(e^{-4\beta\delta E}\eta\right)$. 
\label{th:ITS}
\end{theorem}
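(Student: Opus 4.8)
The plan is to decompose the error of the estimator $\hat{O}/\hat{\openone}$ into a \emph{systematic} part, coming from replacing $e^{-\beta H}$ by the truncated integral operator $G_T(H)$, and a \emph{statistical} part, coming from the finite sample size $N_s$, and then to propagate both through the quotient in Eq.~(\ref{eq:OG}). Writing $\mean{O}_{G_T}(\beta) = \mean{O}_{G_T}(-i\beta,i\beta)/\mean{\openone}_{G_T}(-i\beta,i\beta)$, I would use $\absLR{\hat{O}/\hat{\openone} - \mean{O}(\beta)} \leq \absLR{\hat{O}/\hat{\openone} - \mean{O}_{G_T}(\beta)} + \absLR{\mean{O}_{G_T}(\beta) - \mean{O}(\beta)}$ and force each term below $a_O\eta/2$. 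Because the zeroth-order LOR formula reproduces $e^{-iHt}$ exactly, there is no real-time-simulation error to track, so only the Gaussian bound $\gamma_G$, the truncation bound $\gamma_T$ and the Monte Carlo variance enter. I would fix the parameters as $E_0 = \hat{E}_g - \delta E - c/\beta$ for a suitable constant $c>0$, so that $0 < E_g - E_0 \leq 2\delta E + c/\beta$ and the hypothesis $E_g - E_0 \geq \beta/\tau^2$ of the $\gamma_G$ bound holds once $\epsilon$ is small; then $\tau = \Theta(\beta\sqrt{\ln(1/\epsilon)})$, $T = \Theta(\beta\ln(1/\epsilon))$, $N_t = \Theta(h_{tot}^2 T^2)$ and $N_s = \Theta(C_T^4/(\kappa\epsilon^2))$, with $\epsilon = \Theta(p_b e^{-4\beta\delta E}\eta)$.

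For the systematic term the key observations are that $E_0 \leq E_g$ makes $H$ positive semidefinite, hence $\norm{e^{-\beta H}}_2 \leq 1$, and that combining the Gaussian-error bound $\norm{G(H)-e^{-\beta H}}_2 \leq \gamma_G$ with the truncation bound $\norm{G_T(H)-G(H)}_2 \leq \gamma_T$ gives $\norm{G_T(H)-e^{-\beta H}}_2 \leq \gamma \equiv \gamma_G+\gamma_T$ and $\norm{G_T(H)}_2 \leq 1+\gamma$. Two triangle-inequality steps (inserting $e^{-\beta H} O G_T(H)$) then yield $\absLR{\mean{O}_{G_T}(-i\beta,i\beta) - \mean{O}(-i\beta,i\beta)} \leq (2+\gamma)\gamma\norm{O}_2 \leq 3\gamma a_O$ and, with $O=\openone$, $\absLR{\mean{\openone}_{G_T}(-i\beta,i\beta) - \mean{\openone}(-i\beta,i\beta)} \leq 3\gamma$, using $\norm{O}_2 \leq a_O$. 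Since $\mean{\openone}(-i\beta,i\beta) = \bra{\Psi(0)}e^{-2\beta H}\ket{\Psi(0)} \geq p_b e^{-2\beta(E_g-E_0)}$ and $\abs{\mean{O}(\beta)} \leq a_O$, the quotient identity $A/B - A'/B' = [(A-A')B' + A'(B'-B)]/(BB')$ bounds the systematic term by $O(\gamma a_O e^{2\beta(E_g-E_0)}/p_b) = O(\gamma a_O/\epsilon)$; requiring this to be below $a_O\eta/2$ forces $\gamma_G+\gamma_T = O(\epsilon)$. As $\gamma_G = e^{-(E_g-E_0)^2\tau^2/2}$, $\gamma_T = \tfrac{\sqrt{2}\tau}{\sqrt{\pi}\beta}e^{-T^2/(2\tau^2)}$, and $E_g-E_0$ can be as small as $c/\beta$, this is exactly what pins $\tau = \Theta(\beta\sqrt{\ln(1/\epsilon)})$ and $T = \Theta(\beta\ln(1/\epsilon))$.

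For the statistical term I would apply the variance bounds of Sec.~\ref{sec:variance} to Algorithm~\ref{alg:ITCqcmc}: with $M_s=1$, $\Var_{\hat{O}} \leq 2a_O^2 C_T^4/N_s$ and $\Var_{\hat{\openone}} \leq 2C_T^4/N_s$. Monotonicity of $C_A$ gives $C_T \leq C_A(T/N_t)^{N_t} = 1 + O(h_{tot}^2 T^2/N_t)$, so the choice $N_t = \Theta(h_{tot}^2 T^2)$ makes $C_T = O(1)$ and, with $T = \Theta(\beta\ln(1/\epsilon))$, produces $N_t = O(h_{tot}^2\beta^2(\ln(1/\epsilon))^2)$, i.e.~Eq.~(\ref{eq:NT}). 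Chebyshev's inequality with a union bound, as in the derivation of Eq.~(\ref{eq:Pd}), gives that with probability at least $1 - 4C_T^4/(N_s\delta^2)$ both $\abs{\hat{O} - \mean{O}_{G_T}(-i\beta,i\beta)} < a_O\delta$ and $\abs{\hat{\openone} - \mean{\openone}_{G_T}(-i\beta,i\beta)} < \delta$; on that event, using $\mean{\openone}_{G_T}(-i\beta,i\beta) \geq \tfrac12 p_b e^{-2\beta(E_g-E_0)}$ (which follows once $\gamma = O(\epsilon)$), the same quotient estimate gives $\absLR{\hat{O}/\hat{\openone} - \mean{O}_{G_T}(\beta)} = O(a_O\delta/\epsilon)$. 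Choosing $\delta = \Theta(\epsilon)$ makes this below $a_O\eta/2$, and then $N_s = \Theta(C_T^4/(\kappa\delta^2)) = O(1/(\kappa\epsilon^2))$ drives the failure probability below $\kappa$, i.e.~Eq.~(\ref{eq:Ns}). Adding the two halves closes the argument.

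The step I expect to dominate the work is the joint choice of $E_0$ and $\tau$. The $\gamma_G$ bound needs $E_g - E_0 \geq \beta/\tau^2$, which forces $E_0$ strictly below $E_g$, while the denominator $\bra{\Psi(0)}e^{-2\beta H}\ket{\Psi(0)} \geq p_b e^{-2\beta(E_g-E_0)}$ penalises a large gap $E_g - E_0$ exponentially, and the a~priori uncertainty lets one certify only $E_g - E_0 \leq 2\delta E + c/\beta$. Threading this needle is what yields the effective accuracy $\epsilon = O(e^{-4\beta\delta E}\eta)$ and, because $\tau$ and $T$ consequently scale like $\beta$ times a polylog, the extra factor $\beta^2$ in $N_t$ of Eq.~(\ref{eq:NT}); the remaining details are deferred to Appendix~\ref{app:proofITS}. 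A secondary, routine point is handling a non-unitary $O$: decomposing $O = \sum_j a_j O_j$ into Hermitian unitaries and importance-sampling $j$ with probability proportional to $\abs{a_j}$, as in Algorithm~\ref{alg:ITCqcmc}, keeps each single-shot estimator bounded by $\sqrt{2}\,a_O$, which is what makes $\Var_{\hat{O}} \leq 2a_O^2 C_T^4/N_s$ — and hence the final error $a_O\eta$ — carry the right $a_O$ dependence.
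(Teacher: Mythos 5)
Your proposal is correct and follows essentially the same route as the paper's proof in Appendix~\ref{app:proofITS}: the same three error sources ($\gamma_G$, $\gamma_T$, and a Chebyshev-controlled statistical error with union bound), the same parameter choices $E_0=\hat{E}_g-\delta E-O(1/\beta)$, $\tau=\Theta\left(\beta\sqrt{\ln(1/\epsilon)}\right)$, $T=\Theta\left(\beta\ln(1/\epsilon)\right)$, $N_t=\Theta(h_{tot}^2T^2)$ to keep $C_T=O(1)$, $N_s=\Theta\left(C_T^4/(\kappa\epsilon^2)\right)$, and the same propagation through the quotient using the denominator lower bound $p_b e^{-2\beta(E_g-E_0)}$; splitting at the quotient level into systematic and statistical halves, rather than combining all errors at the correlation level first (the paper's Lemmas~\ref{le:Obb} and \ref{le:Ob}), is only a cosmetic reorganization. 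The one slip is bookkeeping: since the denominator scale is $\Theta(\epsilon/\eta)$ rather than $\Theta(\epsilon)$ when $\epsilon=\Theta\left(p_b e^{-4\beta\delta E}\eta\right)$, your intermediate bounds $O(\gamma a_O/\epsilon)$ and $O(a_O\delta/\epsilon)$ should each carry an extra factor of $\eta$, but the conclusions you draw from them ($\gamma_G+\gamma_T=O(\epsilon)$, $\delta=\Theta(\epsilon)$) are exactly the paper's, so the argument stands.
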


The circuit depth and dependence on the system size can be derived from the theorem. To evaluate the zero-order LOR formula, the gate of each Trotter step is a controlled $U_s(\Delta t)$ gate, where $U_s(\Delta t)$ is a rotation gate or a Pauli gate as defined in Eq.~(\ref{eq:Ust}). Such a controlled gate can be decomposed into $O(n)$ controlled-NOT and single-qubit gates~\cite{Yang2021}, where $n$ is the number of qubits for encoding the simulated system (i.e.~the system size). Then the total number of gates for evaluating the zero-order LOR formula (the two controlled $U_{\bfs}(t)$ gates, see Appendix~\ref{app:circuit}) is $O(nN_t)$. Some additional gates (e.g.~single-qubit gates on the ancillary qubit) are used in the circuit, which however does not change the polynomial scaling with $N_t$ and $n$. Note that gates for preparing the initial state are not taken into account, and we have to assume that the number of these gates scales polynomially with $n$. In addition to this direct dependence on $n$, $h_{tot}$ also depends on the system size. Usually, $h_{tot}$ scales with $n$ polynomially in Hamiltonians that only involve local interactions (see Appendix~\ref{app:details} for example Hamiltonians). Therefore, the overall dependence on the system size is polynomial. Besides the circuit depth and sample size, the qubit cost in our algorithm is up to one ancillary qubit in addition to the $n$ qubits representing the system. 

Later, we will consider the first-order Trotter formula in addition to the zero-order LOR formula. For the Trotter formula, the circuit depth is also proportional to $N_t$. When we use the first-order Trotter formula to approximate the real-time evolution operator, each Trotter step corresponds to a controlled $S_1\left(\Delta t\right)$ gate, see Eq.~(\ref{eq:St}). If each term $H_j$ is a Pauli operator, $S_1\left(\Delta t\right)$ is a product of $M$ (the number of terms in the Hamiltonian) rotation gates. Therefore, we can realise the controlled $S_1\left(\Delta t\right)$ gate with $O(Mn)$ controlled-NOT and single-qubit gates, and the total gate number is $O(MnN_t)$. Similar to the parameter $h_{tot}$ in the LOR formula, the term number $M$ usually depends on the system size and scales with $n$ polynomially in Hamiltonians that only involve local interactions. 

\section{Monte Carlo quantum ground-state solver}
\label{sec:GSS}

We can compute the ground-state energy by simulating the imaginary-time evolution. Specifically, we take $O = \bar{H}$ in Eq.~(\ref{eq:Obeta}) and evaluate the energy in the imaginary-time evolution. When the time $\beta$ increases, the energy approaches the ground-state energy, i.e.~$\lim_{\beta\rightarrow\infty}\mean{\bar{H}}(\beta) = E_g$, under the assumption that the probability of the ground state in the initial state $\ket{\Psi(0)}$ (i.e.~$p_g$) is finite. 

In this section, we first analyse the projection error, i.e.~the error due to a finite $\beta$. Then, we present an iterative algorithm for computing the ground-state energy and analyse its complexity. Our ITS algorithm requires a preliminary estimate of the ground-state energy. In the iterative algorithm, we start with an estimate with a large uncertainty, then we let the uncertainty decrease exponentially with the number of iterations. In addition to the iterative method, we will introduce two other approaches in Sec.~\ref{sec:NDofER}: the optimisation of $E_0$ and quantum subspace diagonalisation. 

\subsection{Projection error}

In the projector method, a projection onto the ground state is applied on the initial state to compute the ground-state energy. We approximate the projection with $e^{-\beta H}$. The final state reads $e^{-\beta H}\ket{\Psi(0)} = \sum_{n=1}^D \sqrt{p_n} e^{-\beta(E_g+E_n-E_0)}\ket{\phi_n}$, where $D$ is the dimension of the Hilbert space, $p_n$ is the initial probability in the eigenstate $\ket{\phi_n}$ with the eigenenergy $E_g+E_n$, i.e.~$E_n\geq 0$. We suppose that $n = 1$ corresponds to the ground state, i.e.~$p_1 = p_g$, $\ket{\phi_1} = \ket{\Psi_g}$ and $E_1 = 0$. Then, the expected value of energy reads 
\begin{eqnarray}
\mean{\bar{H}}(\beta) &=& E_g + \frac{\sum_{n=2}^D p_n e^{-2\beta E_n} E_n}{p_g + \sum_{n=2}^D p_n e^{-2\beta E_n}} = E_g + \frac{\sum_{n=2}^D p_n e^{-2\beta E_n} E_n}{\sum_{n=2}^D p_n \left(\alpha+e^{-2\beta E_n}\right)}.
\label{eq:HG}
\end{eqnarray}
where $\alpha = \frac{p_g}{1-p_g}$, and we have used that $\sum_{n=2}^D p_n = 1-p_g$. The second term in Eq.~(\ref{eq:HG}) is the error due to the imperfect projection. Note that $\mean{\bar{H}}(\beta) \geq E_g$. 

From Eq.~(\ref{eq:HG}), we can find that the expected value converges to the true ground-state energy in the limit $\beta\rightarrow\infty$, as long as $p_g$ is finite. The speed of convergence depends on the energy gap above the ground state, $\Delta = \min_{n=2,\ldots,D} E_n$. Without the gap, the projection error is inversely proportional to the evolution time and has the upper bound 
\begin{eqnarray}
\mean{\bar{H}}(\beta) - E_g \leq \frac{1}{2\beta} \ln\left(1+\frac{1}{e\alpha}\right).
\label{eq:Perror}
\end{eqnarray}
With a finite energy gap, the projection error decreases exponentially with time. When $2\beta\Delta \geq 1+\ln(1+e^{-1}\alpha^{-1})$, the upper bound becomes 
\begin{eqnarray}
\mean{\bar{H}}(\beta)-E_g &\leq & \frac{e^{-2\beta\Delta}\Delta}{\alpha + e^{-2\beta\Delta}}.
\label{eq:gap}
\end{eqnarray}
See Appendix~\ref{app:Perror} for proofs of the upper bounds. 

\subsection{Iterative ground-state solver and its complexity}
\label{sec:iterativeGSS}

In the iterative ground-state solver, we need an initial estimate of the ground-state energy. There is a simple and universal initial estimate that always works. Because $\norm{\bar{H}}_2\leq h_{tot}$, the ground-state energy is in the interval $[-h_{tot},h_{tot}]$. Therefore, $\hat{E}_g=0$ and $\delta E = h_{tot}$ is an estimate of the ground-state energy satisfying $E_g\in [\hat{E}_g-\delta E,\hat{E}_g+\delta E]$ (as required in Theorem~\ref{th:ITS}); We take this as the initial estimate. 

With the initial estimate, in each round of iteration, we choose parameters in ITS such that the uncertainty $\delta E$ is reduced by a fixed factor. We take the factor of $\frac{3}{4}$ as an example, see Algorithm~\ref{alg:iterativeGSS}. Details of choosing parameters in the algorithm are given in Appendix~\ref{app:proofGSS}. 

\begin{figure*}
\begin{minipage}{\linewidth}
\begin{algorithm}[H]
{\small
\begin{algorithmic}[1]
\caption{{\small Iterative ground-state solver.}}
\label{alg:iterativeGSS}
\Statex
\State Input $\bar{H}$ and the number of iterations $N_i$. 
\Comment $\bar{H} = \sum_{j=1}^M h_j\sigma_j$ and $h_{tot} = \sum_j \abs{h_j}$
\State Take $\hat{E}_g = 0$ and $\delta E = h_{tot}$. 
\For{$i=1$ to $N_i$}
\State Choose $\beta$ such that the projection error [Eq.~(\ref{eq:Perror}) in the general case or Eq.~(\ref{eq:gap}) with a finite energy gap] is smaller than $\frac{\delta E}{2}$. 
\State Take parameters in ITS such that the ITS error $h_{tot}\eta$ is smaller than $\frac{\delta E}{4}$. 
\Comment See Appendix~\ref{app:proofGSS}. 
\State Implement ITS according to Algorithm~\ref{alg:ITS}. 
\State $\hat{E}_g \leftarrow \frac{\hat{O}}{\hat{\openone}}$ and $\delta E \leftarrow \frac{3\delta E}{4}$. 
\Comment $O=\bar{H}$
\EndFor
\State Output $\hat{E}_g$ as the result of ground-state energy. 
\end{algorithmic}
}
\end{algorithm}
\end{minipage}
\end{figure*}

In the general case (without assuming a finite energy gap), the upper bound of the projection error decreases linearly with $\frac{1}{\beta}$. To reduce the error in the ground-state energy to $\xi$, we need to take $\beta \sim \frac{1}{\xi}$. In this case, the circuit depth $N_t\sim \beta^2\sim \frac{1}{\xi^2}$ scales polynomially with the desired accuracy. If there is a finite energy gap above the ground state, the upper bound of the projection error decreases exponentially with $\beta$. Then, $\beta \sim \ln\frac{1}{\xi}$, and the circuit depth $N_t\sim \left(\ln\frac{1}{\xi}\right)^2$ scales polylogarithmically with the desired accuracy. The costs of our iterative GSS algorithm are summarised in Theorem~\ref{th:GSS}. The proof is in Appendix~\ref{app:proofGSS}. 

\begin{theorem}
Let $\xi$ and $\kappa$ be any positive numbers. Suppose that $p_b$ satisfies $p_g \geq p_b$. The result of ground-state energy from the iterative GSS algorithm is $\hat{E}_g$. If we take proper parameters in the algorithm, the inequality $\abs{\hat{E}_g-E_g}<h_{tot}\xi$ holds with a probability higher than $1-\kappa$, with the largest Trotter step number $N_{t,max}$ and total sample size 
\begin{eqnarray}
N_{s,tot} = O\left(\frac{1}{\kappa\xi^2}\left(\ln\frac{1}{\xi}\right)^2\right).
\label{eq:NsTOT}
\end{eqnarray}
The largest Trotter step number depends on the energy gap: 

(i) In the general case, 
\begin{eqnarray}
N_{t,max} = O\left(\frac{1}{\xi^2} \left(\ln\frac{1}{\xi}\right)^2\right);
\label{eq:NTmax}
\end{eqnarray}

(ii) If there is a finite energy gap $\Delta$ above the ground state, 
\begin{eqnarray}
N_{t,max} = O\left(\frac{h_{tot}^2}{\Delta_b^2} \left(\ln\frac{1}{\xi}\right)^4\right),
\end{eqnarray}
where $\Delta_b$ is an input parameter satisfying $\Delta \geq \Delta_b$. 
\label{th:GSS}
\end{theorem}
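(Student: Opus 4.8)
The plan is to run an induction over the iteration counter of Algorithm~\ref{alg:iterativeGSS}, treating each call of the ITS subroutine through Theorem~\ref{th:ITS} as a black box and using the projection-error bounds Eq.~(\ref{eq:Perror}) and Eq.~(\ref{eq:gap}) to control the finite-$\beta$ bias. Let $\delta E_i = (3/4)^{i-1}h_{tot}$ denote the uncertainty that enters round $i$, so $\delta E_1 = h_{tot}$ and after $N_i$ rounds it is $(3/4)^{N_i}h_{tot}$. The invariant to propagate is $E_g \in [\hat{E}_g - \delta E_i, \hat{E}_g + \delta E_i]$, which is precisely the hypothesis needed to invoke Theorem~\ref{th:ITS} in round $i+1$ (it fixes $E_0 = \hat{E}_g - \delta E_i$ as a genuine lower estimate of $E_g$ close to it, and validates the $E_g - E_0 \ge \beta/\tau^2$ condition behind the bound $\gamma_G$). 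The base case holds because $\bar{H} = \sum_j h_j\sigma_j$ gives $\norm{\bar{H}}_2 \le h_{tot}$, hence $E_g \in [-h_{tot}, h_{tot}]$, which is exactly the initialisation $\hat{E}_g = 0$, $\delta E = h_{tot}$.

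For the inductive step, assume the invariant entering round $i$. First choose $\beta_i$ so that the projection error is below $\delta E_i/2$. Since $p_g \ge p_b$ gives $\alpha \ge \alpha_b := p_b/(1-p_b)$, and (once $\beta_i \gtrsim \Delta_b^{-1}$ in the gapped case) the right-hand sides of Eq.~(\ref{eq:Perror}) and Eq.~(\ref{eq:gap}) are controlled by their values at $\alpha = \alpha_b$ and $\Delta = \Delta_b$, it suffices to take $\beta_i = \Theta(\delta E_i^{-1}\ln(1+(e\alpha_b)^{-1}))$ in the general case and $\beta_i = \Theta(\Delta_b^{-1}\ln(\Delta_b/(\alpha_b\delta E_i)))$ when a gap $\Delta \ge \Delta_b$ is assumed. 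Then invoke Theorem~\ref{th:ITS} with $O = \bar{H}$ (so $a_O = h_{tot}$), prior estimate $(\hat{E}_g, \delta E_i)$, target ITS error $h_{tot}\eta_i$ with $\eta_i = \Theta(\delta E_i/h_{tot})$, and per-round failure budget $\kappa_i$: with probability at least $1-\kappa_i$ this yields $|\hat{O}/\hat{\openone} - \mean{\bar{H}}(\beta_i)| < h_{tot}\eta_i < \delta E_i/4$, using $N_s^{(i)} = O(1/(\kappa_i\epsilon_i^2))$ samples and $N_t^{(i)} = O(h_{tot}^2\beta_i^2(\ln(1/\epsilon_i))^2)$ Trotter steps, where $\epsilon_i = O(e^{-4\beta_i\delta E_i}\eta_i)$. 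Combining with the projection bound, $|\hat{E}_g^{\mathrm{new}} - E_g| < \delta E_i/2 + \delta E_i/4 = (3/4)\delta E_i = \delta E_{i+1}$, so the invariant is restored and the update $\delta E \leftarrow (3/4)\delta E$ is justified.

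To conclude, take $N_i = \lceil\log_{4/3}(1/\xi)\rceil = O(\ln(1/\xi))$, so that $\delta E_{N_i+1} = (3/4)^{N_i}h_{tot} < h_{tot}\xi$ and the invariant gives $|\hat{E}_g - E_g| < h_{tot}\xi$. A union bound over the $N_i$ rounds bounds the overall failure probability by $\sum_i\kappa_i$, and the choice $\kappa_i = \kappa/N_i$ makes this at most $\kappa$. For the costs, note that in both regimes $e^{-4\beta_i\delta E_i} = \Theta(1)$ ($\beta_i\delta E_i$ is bounded in the general case and $\to 0$ in the gapped case), so $\epsilon_i = \Theta(\eta_i) = \Theta(\delta E_i/h_{tot}) = \Theta((3/4)^i)$; hence $N_s^{(i)} = O(N_i(4/3)^{2i}/\kappa)$, which is geometric and maximal at $i = N_i$ with $N_s^{(N_i)} = O(N_i/(\kappa\xi^2))$, and bounding all $N_i$ rounds by this largest value gives $N_{s,tot} = O(N_i^2/(\kappa\xi^2)) = O((\ln(1/\xi))^2/(\kappa\xi^2))$. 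The circuit depth is largest in the last round: in the general case $\beta_{N_i} = O(1/(h_{tot}\xi))$ and $\ln(1/\epsilon_{N_i}) = O(\ln(1/\xi))$, so that $h_{tot}^2\beta_{N_i}^2$ loses its $h_{tot}$ dependence and $N_{t,max} = O(h_{tot}^2\beta_{N_i}^2(\ln(1/\xi))^2) = O(\xi^{-2}(\ln(1/\xi))^2)$; with a gap, $\beta_{N_i} = O(\Delta_b^{-1}\ln(1/\xi))$ gives $N_{t,max} = O(h_{tot}^2\Delta_b^{-2}(\ln(1/\xi))^4)$.

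I expect the main difficulty to be the error and probability bookkeeping across rounds rather than any single estimate: one must make sure the invariant is exactly the right strength so that every ITS call meets the hypotheses of Theorem~\ref{th:ITS} (in particular that $E_0 = \hat{E}_g - \delta E_i$ remains a genuine lower bound on $E_g$, conditioned on all previous rounds succeeding), and that the conditioning introduced by the union bound does not degrade the per-round guarantees. A secondary technical point is checking the monotonicity of the projection-error bounds Eq.~(\ref{eq:Perror})--Eq.~(\ref{eq:gap}) carefully enough that a single worst-case $\beta_i$ works uniformly over $\alpha \ge \alpha_b$ and $\Delta \ge \Delta_b$, and confirming that the $e^{-4\beta_i\delta E_i}$ factor in $\epsilon_i$ stays $\Theta(1)$ so that it does not inflate the sample count in either regime.
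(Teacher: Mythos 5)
Your treatment of the general case (i) follows the paper's proof essentially step by step: the same schedule $\beta_i=\Theta\left(\delta E_i^{-1}\ln(1+(e\alpha_b)^{-1})\right)$, the same per-round ITS target $\eta_i\sim\delta E_i/h_{tot}$ with per-round failure budget $\kappa/N_i$, and the same bookkeeping $N_i=O(\ln(1/\xi))$, $N_{s,tot}\le N_i N_{s,max}$, so that part is sound.

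The gapped case (ii) has a genuine flaw. You prescribe $\beta_i=\Theta\left(\Delta_b^{-1}\ln(\Delta_b/(\alpha_b\delta E_i))\right)$ in \emph{every} round and then assert that $e^{-4\beta_i\delta E_i}=\Theta(1)$ because ``$\beta_i\delta E_i\to 0$ in the gapped case''. That is only true in late rounds. In the early rounds $\delta E_i$ is of order $h_{tot}$, while any admissible $\beta_i$ in your scheme is of order $\Delta_b^{-1}$ (for $\delta E_i>\Delta_b/\alpha_b$ your logarithm is even negative, and the validity condition $2\beta\Delta\ge 1+\ln(1+e^{-1}\alpha^{-1})$ of Eq.~(\ref{eq:gap}) already forces $\beta_i\gtrsim\Delta_b^{-1}$). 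Hence $\beta_i\delta E_i\sim h_{tot}/\Delta_b$, which is typically large, and Theorem~\ref{th:ITS} then gives $\epsilon_i=O(e^{-4\beta_i\delta E_i}\eta_i)$, i.e.\ $\epsilon_i$ exponentially small in $h_{tot}/\Delta_b$. The per-round sample size $N_s^{(i)}=O(1/(\kappa_i\epsilon_i^{2}))$ blows up exponentially in $h_{tot}/\Delta_b$, and the Trotter count $N_t^{(i)}=O\left(h_{tot}^2\beta_i^2(\ln(1/\epsilon_i))^2\right)$ acquires extra $(h_{tot}/\Delta_b)^2$ factors, exceeding the bounds claimed in the theorem, whose explicit $h_{tot}^2/\Delta_b^2$ factor is meant to capture the full dependence on these parameters. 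The paper avoids this with a two-stage schedule: it first runs the general-case iteration ($\beta\sim 1/\delta E$, so $\beta\delta E$ stays $O(1)$ and $\epsilon=\Theta(\eta)$ in every round) until $\delta E$ has been reduced to $\beta_\Delta^{-1}$, with $\beta_\Delta=\frac{1}{2\Delta_b}\ln\frac{4}{\alpha_b\xi}$, and only then performs a single gap-exploiting ITS with $\beta=\beta_\Delta$, for which $\beta\,\delta E=1$. Your argument can be repaired by capping $\beta_i$ at $O(1/\delta E_i)$, i.e.\ switching to the long evolution time only once $\delta E_i\lesssim\Delta_b/\ln(1/\xi)$, which reproduces exactly the paper's construction; as written, however, the cost accounting for case (ii) does not go through. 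A secondary point you would also need (and which the paper spells out) is guaranteeing the hypothesis of Eq.~(\ref{eq:gap}) uniformly, e.g.\ via the paper's device of replacing $p_b$ by $\min\{p_b,3/(3+e)\}$.
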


Now, we reconsider our ITS algorithm for computing any observable $O$. If we do not assume prior knowledge about the ground-state energy, we can work out it with our iterative GSS. Because our GSS algorithm includes ITS as a subroutine, let's call ITS for computing $O$ the task ITS for clarity. Given the evolution time $\beta$ of the task ITS, the accuracy of the ground-state energy at the level of $\delta E\sim\frac{1}{\beta}$ is required. Before implementing the task ITS, we can run the iterative GSS algorithm taking $\xi\sim \frac{1}{h_{tot}\beta}$. We can find that the cost in iterative GSS is independent of the desired accuracy $\eta$ in the task ITS. Therefore, incorporating iterative GSS does not change the dependence on $\eta$. The proof of the following theorem is given in Appendix~\ref{app:proofITSGSS}. 

\begin{theorem}
\textbf{ITS incorporating iterative GSS.} Let $\eta$ and $\kappa$ be any positive numbers. Suppose that $p_b$ satisfies $p_g \geq p_b$. If we take proper parameters in the algorithm, the inequality 
\begin{eqnarray}
\absLR{\frac{\hat{O}}{\hat{\openone}} - \mean{O}(\beta)} < a_O\eta
\end{eqnarray}
holds with a probability higher than $1-\kappa$, with the largest Trotter step number 
\begin{eqnarray}
N_{t,max} = O\left(h_{tot}^2\beta^2 \left(\ln\frac{1}{\zeta}\right)^2\right)
\end{eqnarray}
and total sample size 
\begin{eqnarray}
N_{s,tot} = O\left(\frac{1}{\kappa\zeta^2}\left(\ln\frac{1}{\zeta}\right)^2\right).
\end{eqnarray}
where $\zeta = \min\left\{ \eta,\frac{1}{h_{tot}\beta} \right\}$. 
\label{th:ITSGSS}
\end{theorem}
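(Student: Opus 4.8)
The statement is a composition of the two already-established results, Theorem~\ref{th:ITS} (ITS given a prior energy estimate) and Theorem~\ref{th:GSS} (iterative GSS): the plan is to run them back to back and tally the costs. First I would run the iterative GSS of Algorithm~\ref{alg:iterativeGSS}, which needs no external input since it starts from the trivial estimate $\hat{E}_g = 0$, $\delta E = h_{tot}$ (valid because $\norm{\bar{H}}_2 \le h_{tot}$), calling it with failure probability $\kappa/2$ and target accuracy $\xi_0 = \Theta\!\left(\min\{1,1/(h_{tot}\beta)\}\right)$. The choice $\xi_0 \sim 1/(h_{tot}\beta)$ is the crucial one: it makes the output uncertainty $\delta E = h_{tot}\xi_0 = \Theta(1/\beta)$, which is exactly small enough that, after taking $E_0$ at distance $\Theta(\delta E)$ below $\hat{E}_g$ (so $E_0 \le E_g$ and $E_g - E_0 = \Theta(1/\beta)$), the denominator bound $\bra{\Psi(0)}e^{-2\beta H}\ket{\Psi(0)} \ge p_g e^{-2(E_g-E_0)\beta}$ used inside the proof of Theorem~\ref{th:ITS} loses only an $O(1)$ factor, i.e.\ $e^{-4\beta\delta E} = \Theta(1)$; yet $\xi_0$ is not so small that the GSS run dominates the total cost.

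By Theorem~\ref{th:GSS}, part~(i) (general case, no gap assumed), this run returns $\hat{E}_g$ with $\abs{\hat{E}_g - E_g} < h_{tot}\xi_0$ with probability at least $1-\kappa/2$, at cost $N_{t,\max}^{\mathrm{GSS}} = O\!\left(\xi_0^{-2}(\ln\xi_0^{-1})^2\right)$ and $N_s^{\mathrm{GSS}} = O\!\left(\kappa^{-1}\xi_0^{-2}(\ln\xi_0^{-1})^2\right)$. Conditioned on its success, I would then run the task ITS as in Theorem~\ref{th:ITS} with this $(\hat{E}_g,\delta E)$ and the chosen $E_0$, permissible error $\eta$, and failure probability $\kappa/2$. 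Because $\delta E = \Theta(1/\beta)$ forces $\epsilon = O(e^{-4\beta\delta E}\eta) = O(\eta)$, Theorem~\ref{th:ITS} yields $\absLR{\hat{O}/\hat{\openone} - \mean{O}(\beta)} < a_O\eta$ with probability at least $1-\kappa/2$ (conditionally), at cost $N_t^{\mathrm{task}} = O\!\left(h_{tot}^2\beta^2(\ln\eta^{-1})^2\right)$ and $N_s^{\mathrm{task}} = O\!\left(\kappa^{-1}\eta^{-2}\right)$. The accuracy bound is supplied entirely by the task ITS; the GSS run only fixes $E_0$. Since the two stages use independent randomness, a union bound gives overall success probability at least $(1-\kappa/2)^2 \ge 1-\kappa$.

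It then remains to add $N_s^{\mathrm{GSS}} + N_s^{\mathrm{task}}$ and take $\max\{N_{t,\max}^{\mathrm{GSS}}, N_t^{\mathrm{task}}\}$, substitute $\xi_0 = \Theta(1/(h_{tot}\beta))$ (so $\xi_0^{-2} = \Theta(h_{tot}^2\beta^2)$ and $\ln\xi_0^{-1} = \Theta(\ln(h_{tot}\beta))$), and split on whether $\eta \le 1/(h_{tot}\beta)$ or $\eta \ge 1/(h_{tot}\beta)$. In the first regime $\zeta = \eta$ and $h_{tot}\beta \le \zeta^{-1}$, so the task terms dominate; in the second $\zeta = 1/(h_{tot}\beta)$ and $\eta \ge \zeta$, so the GSS terms dominate. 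Either way, using $h_{tot}\beta \le \zeta^{-1}$ and $\eta \ge \zeta$ in the relevant case, both sums collapse to $N_{s,tot} = O\!\left(\kappa^{-1}\zeta^{-2}(\ln\zeta^{-1})^2\right)$ and $N_{t,\max} = O\!\left(h_{tot}^2\beta^2(\ln\zeta^{-1})^2\right)$ with $\zeta = \min\{\eta,1/(h_{tot}\beta)\}$, as claimed. I expect the only real care to be in this final case split — in particular, checking that the preliminary GSS cost is genuinely absorbed into the stated bound in every regime, and that pinning $\xi_0$ at the threshold $1/(h_{tot}\beta)$ both suffices for $e^{-4\beta\delta E}=O(1)$ and is not wasteful; all interior parameter choices ($\tau$, $T$, the per-step $N_t$, the per-iteration $\beta_i$) are inherited unchanged from Appendices~\ref{app:proofITS} and~\ref{app:proofGSS}.
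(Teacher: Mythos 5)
Your proposal is correct and follows essentially the same route as the paper's proof in Appendix~\ref{app:proofITSGSS}: run the general-case iterative GSS with target accuracy $\xi \sim \frac{1}{h_{tot}\beta}$ (so $\delta E \sim \beta^{-1}$ and $\epsilon = O(e^{-4}\eta)$ in the task ITS), allocate failure probability $\kappa/2$ to each stage, and combine the costs of Theorems~\ref{th:ITS} and~\ref{th:GSS} into the $\zeta$-dependent bounds. The only cosmetic differences are your explicit $\min\{1,\cdot\}$ in the choice of $\xi_0$ and the spelled-out case split on $\zeta$, which the paper handles implicitly by taking the max/sum of the two stage costs.
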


\section{Error resilience - Suppressed error distribution in the frequency space}
\label{sec:resilience}

Our first result about the resilience to errors is the complexity analysis, which shows that the circuit depth (which is proportional to the Trotter step number $N_t$) scales polylogarithmically with permissible errors $\eta$ and $\xi$ (depending on the gap in GSS), in contrast to the polynomial scaling~\cite{Motta2019, Lee2021}. Instead of the zeroth-order LOR formula considered in the complexity analysis, in the following, we show that our algorithms are also resilient to Trotterisation errors in the first-order Trotter formula, and the error resilience reduces the circuit depth for implementing Trotterisation. In this section, we give an explanation of this resilience, and in the next section, we demonstrate it numerically with various models. 

\subsection{Imaginary-time evolution operators with Trotterisation errors}

\begin{figure}[tbp]
\begin{center}
\includegraphics[width=0.75\linewidth]{\figpath/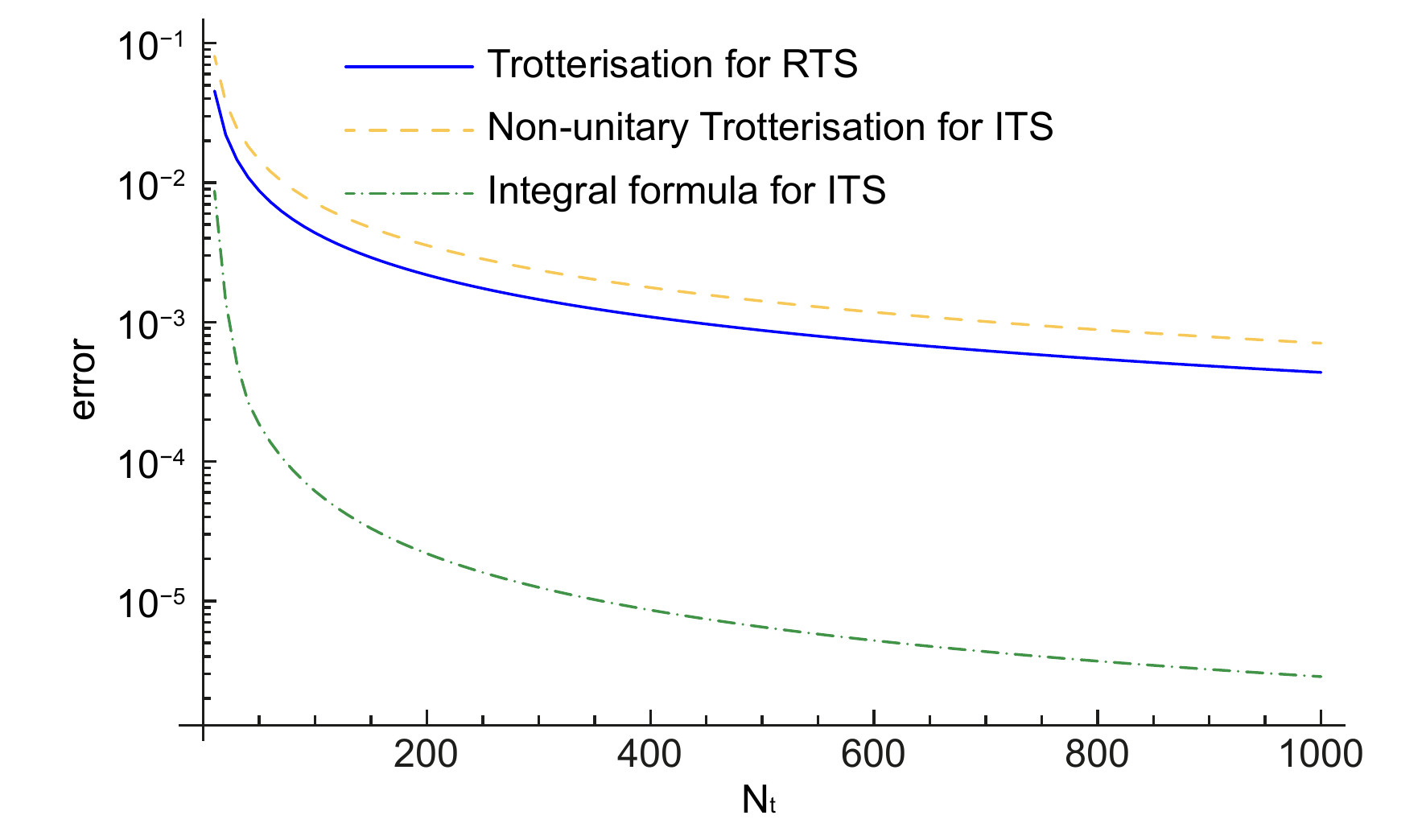}
\caption{
Trotterisation errors $\norm{\tilde{V}_\alpha-V_\alpha}_2$ in the real-time simulation (RTS) and imaginary-time simulation (ITS) as functions of the Trotter step number $N_t$. The Hamiltonian is $\bar{H} = \sigma^x + \sigma^z$. We take $t = \beta = 2$, $\tau = 2\beta$ and $E_0 = E_g$. 
}
\label{fig:error1q}
\end{center}
\end{figure}

Before explaining the error resilience, we first work out the imaginary-time evolution operator with Trotterisation errors according to the integral formula. The real-time evolution operator realised with the Trotter formula in the spectral decomposition is 
\begin{eqnarray}
\tilde{U}(t) &=& \sum_{n_1,n_2,\ldots ,n_{N_tM}} e^{-i\sum_{j=1}^{N_tM} \omega_{j,n_j} \Delta t} \Pi_{N_tM,n_{N_tM}}\cdots \Pi_{2,n_2} \Pi_{1,n_1}.
\label{eq:Ut}
\end{eqnarray}
According to the integral formula in Eq.~(\ref{eq:int}) (without truncation), the imaginary-time evolution operator with Trotterisation errors reads 
\begin{eqnarray}
\tilde{G}(H) &=& \int_{-\infty}^{\infty} dt g(t) e^{iE_0 t} \tilde{U}(t) \notag \\
&=& \sum_{n_1,n_2,\ldots ,n_{N_tM}} G\left(\frac{1}{N_t}\sum_{j=1}^{N_tM} \omega_{j,n_j} - E_0\right) \Pi_{N_tM,n_{N_tM}}\cdots \Pi_{2,n_2} \Pi_{1,n_1} \notag \\
&=& \calP\left\{ G(H_{sum} - E_0\openone) \right\}. 
\end{eqnarray}

In addition to the integral formula, we can also realise the imaginary-time evolution operator with a product of non-unitary operators according to the Trotter formula~\cite{Motta2019, Lin2021}. To realise $e^{-\beta H}$, the non-unitary product is $e^{E_0 \beta}\left[S_1\left(-i\frac{\beta}{N_t}\right)\right]^{N_t}$, where $N_t$ is the number of Trotter steps, and $S_1(t)$ is defined in Eq.~(\ref{eq:St}). When $t$ is imaginary, $S_1(t)$ is non-unitary. 

\subsection{Three operators for comparison}

\begin{figure}[tbp]
\begin{center}
\includegraphics[width=0.75\linewidth]{\figpath/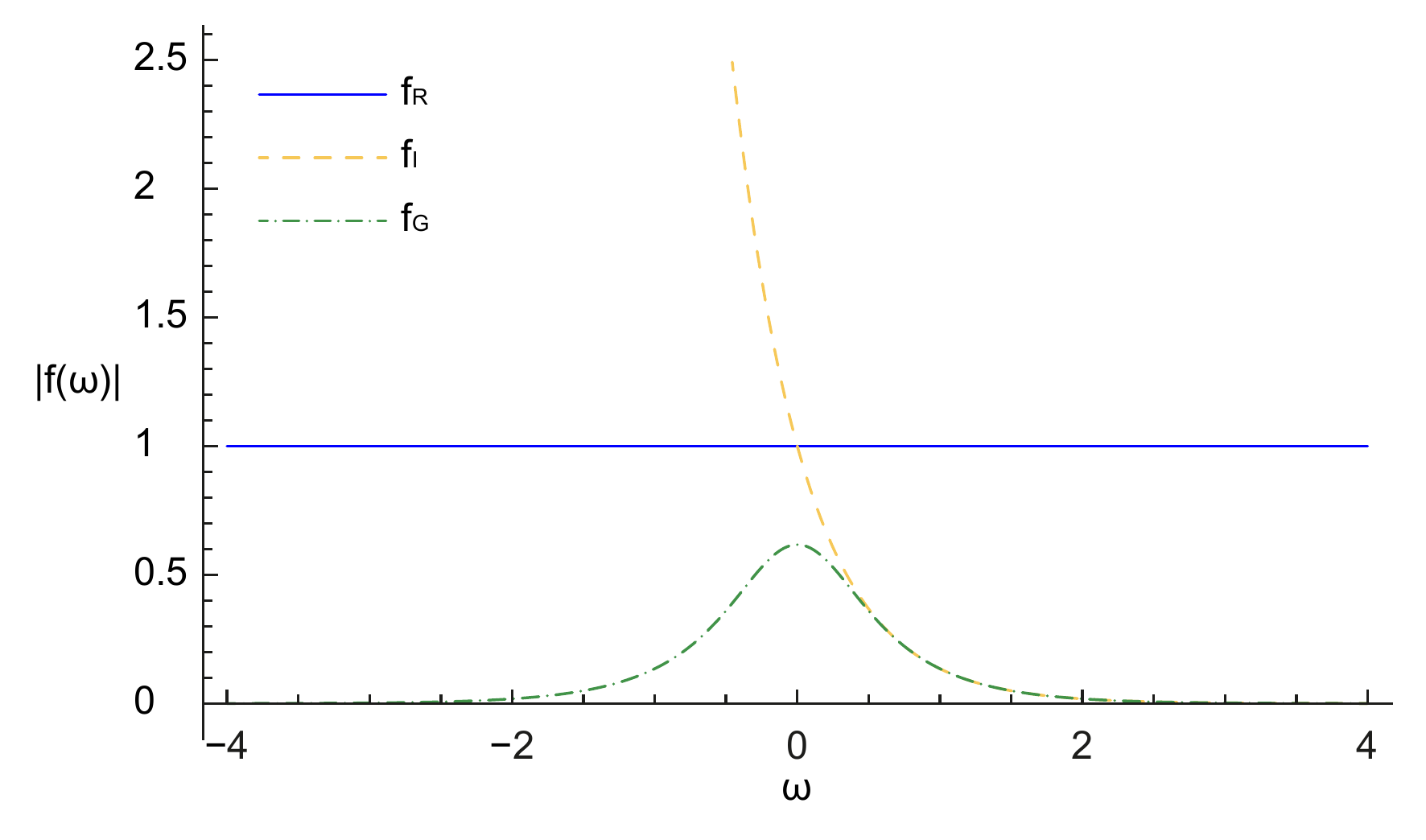}
\caption{
Functions $f_R(\omega)$, $f_I(\omega)$ and $f_G(\omega)$. We take $t = \beta = 2$ and $\tau = 2\beta$. 
}
\label{fig:functions}
\end{center}
\end{figure}

We show the error resilience in the comparison between three operators: the real-time evolution operator $V_R = e^{-iHt}$ realised with the Trotter formula, the imaginary-time evolution operator $V_I = e^{-\beta H}$ realised with the non-unitary product and the (approximate) imaginary-time evolution operator $V_G = G(H)$ realised with the integral. To simplify expressions, we define three functions $f_R(\omega) = e^{-i\omega t}$, $f_I(\omega) = e^{-\beta\omega}$ and $f_G(\omega) = G(\omega)$. Then we can express error-free operators as $V_\alpha = f_\alpha(H)$, where $\alpha = R,I,G$. 

The three operators with Trotterisation errors are $\tilde{V}_R = e^{iE_0 t}\left[S_1\left(\frac{t}{N_t}\right)\right]^{N_t}$, $\tilde{V}_I = e^{E_0 \beta}\left[S_1\left(-i\frac{\beta}{N_t}\right)\right]^{N_t}$ and $\tilde{V}_G = \calP\left\{ G(H_{sum} - E_0\openone) \right\}$, respectively. Using the spectral decomposition, we can also express these operators in a unified form, $\tilde{V}_\alpha = \calP\left\{f_\alpha(H_{sum} - E_0\openone)\right\}$. The Trotterisation error in each operator is $\tilde{V}_\alpha-V_\alpha$. 

In Fig.~\ref{fig:error1q}, we plot magnitudes of errors in three operators taking a one-qubit Hamiltonian as an example. The figure shows that the error in $V_R$ decreases rapidly with $N_t$ especially when $N_t$ is small, which opens a large gap from $V_R$ and $V_I$, i.e.~the integral formula of imaginary-time evolution operator is much more robust to Trotterisation errors. 

\subsection{Error distribution in the frequency space}

\begin{figure}[tbp]
\begin{center}
\includegraphics[width=0.75\linewidth]{\figpath/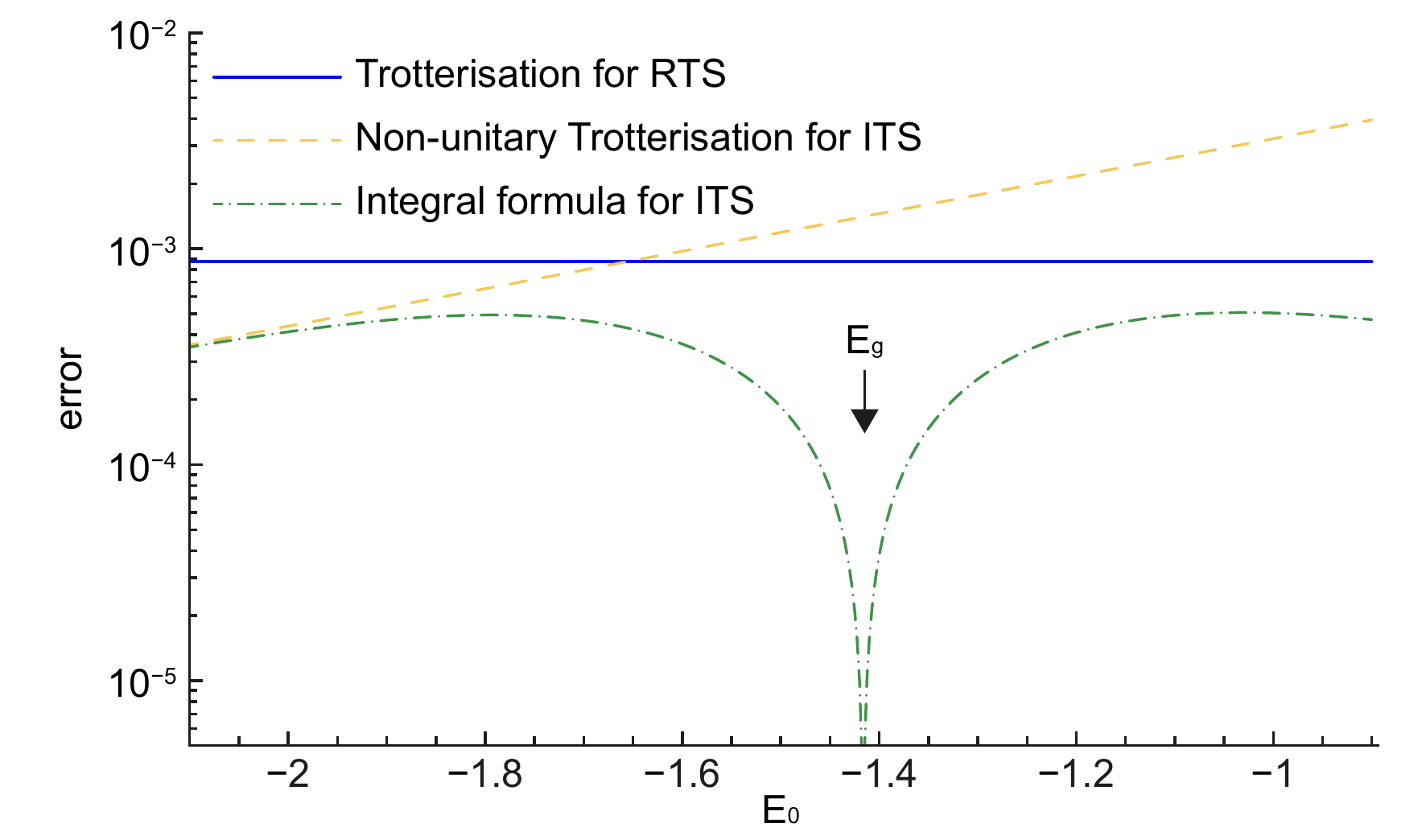}
\caption{
Trotterisation errors $\norm{\tilde{V}_\alpha-V_\alpha}_2$ in the real-time simulation (RTS) and imaginary-time simulation (ITS) as functions of $E_0$. The Hamiltonian is $\bar{H} = \sigma^x + \sigma^z$. We take $t = \beta = 2$, $\tau = 2\beta$ and $N_t = 500$. 
}
\label{fig:error1qE0}
\end{center}
\end{figure}

To explain the error resilience, we consider the Fourier decomposition of the Trotterisation error. For the real-time evolution, 
\begin{eqnarray}
\tilde{V}_R - V_R &=& \sum_l v_l e^{-i(\omega_l-E_0)t} = \sum_l v_l f_R(\omega_l-E_0).
\end{eqnarray}
Here, $l$ is the label of the frequency $\omega_l$, and $v_l$ are operator-valued coefficients. The Fourier decomposition of $V_R$ is $V_R = \sum_n e^{-i(E_g+E_n-E_0)t}\ketbra{\phi_n}{\phi_n}$, where $E_g+E_n$ are eigenvalues of $\bar{H}$. The Fourier decomposition of $\tilde{V}_R$ is given by Eq.~(\ref{eq:Ut}), notice that $\tilde{V}_R = e^{iE_0 t}\tilde{U}(t)$. Therefore, $\{\omega_l\} = \{E_g+E_n\}\cup\{\frac{1}{N_t}\sum_{j=1}^{N_tM} \omega_{j,n_j}\}$. 

To obtain a similar decomposition of the Trotterisation error in $V_I$, we can simply replace $t$ with $-i\beta$. We have 
\begin{eqnarray}
\tilde{V}_I - V_I &=& \sum_l v_l e^{-\beta(\omega_l-E_0)} = \sum_l v_l f_I(\omega_l-E_0).
\end{eqnarray}
For $V_G$, the integral of $\tilde{V}_R - V_R$ results in 
\begin{eqnarray}
\tilde{V}_G - V_G &=& \sum_l v_l G(\omega_l-E_0) = \sum_l v_l f_G(\omega_l-E_0).
\end{eqnarray}
Therefore, for all three operators, there is a unified expression of Trotterisation errors 
\begin{eqnarray}
\tilde{V}_\alpha - V_\alpha = \sum_l v_l f_\alpha(\omega_l-E_0).
\end{eqnarray}
Note that the operator-valued coefficients $v_l$ are the same for three operators, and the Trotterisation error is determined by the function $f_\alpha$. 

The Fourier decomposition of Trotterisation errors explains the error resilience. We plot three functions $f_\alpha$ in Fig.~\ref{fig:functions}. The absolute value of $f_R(\omega)$ is always one. The function $f_I(\omega)$ is smaller than one when $\omega>0$ and larger than one when $\omega<0$. In comparison, $f_G(\omega)$ is always smaller than one: The function takes its maximum value at $\omega = 0$ and decreases exponentially to zero as $\abs{\omega}$ increases. The function $f_G(\omega)$ suppresses the impact of $v_l$. As a result, the error in $V_G$ is much smaller than in $V_R$, but the error in $V_I$ can be even larger than in $V_R$, as shown in Fig.~\ref{fig:error1q}. This observation suggests that our algorithm based on the integral formula is more robust to Trotterisation errors than ITS algorithms based on the non-unitary Trotter formula~\cite{Motta2019, Lin2021}. 

To verify this explanation, we plot magnitudes of errors in three operators as functions of $E_0$, see Fig.~\ref{fig:error1qE0}. Because $f_G(\omega)$ is smaller than one for all $\omega$, the impact of all $v_l$ is suppressed regardless of $E_0$. Therefore, we expect to observe the error suppression (compared with $V_R$) for all $E_0$, which coincides with the numerical result in Fig.~\ref{fig:error1qE0}. 

\section{Numerical demonstration of the error resilience}
\label{sec:NDofER}

\begin{figure*}
\begin{minipage}{\linewidth}
\begin{algorithm}[H]
{\small
\begin{algorithmic}[1]
\caption{{\small Ground-state solver by optimising $E_0$.}}
\label{alg:GSSE0}
\Statex
\State Input $\bar{H},\beta,\tau$. 
\State Generate quantum-computing data for $\hat{\bar{H}}$: 
\State ~~~~Take $E_0 = 0$, compute $\hat{\bar{H}}$ according to Algorithm~\ref{alg:ITS}, and save all the data. 
\Comment It is similar for Algorithm~\ref{alg:ITCqcmc}. 
\State ~~~~Define the function $\hat{\bar{H}}(E_0) \equiv \frac{h_{tot}C^2}{N_sM_s}\sum_{l=1}^{N_s}\sum_{k=1}^{M_s}{\rm sgn}(a_{j_l})e^{iE_0(t_l-t_l')}(\mu_{R,l,k}+i\mu_{I,l,k})$. 
\State Generate quantum-computing data for $\hat{\openone}$: 
\State ~~~~Take $E_0 = 0$, compute $\hat{\openone}$ according to Algorithm~\ref{alg:ITS}, and save all the data. 
\Comment It is similar for Algorithm~\ref{alg:ITCqcmc}. 
\State ~~~~Define the function $\hat{\openone}(E_0) \equiv \frac{C^2}{N_sM_s}\sum_{l=1}^{N_s}\sum_{k=1}^{M_s}{\rm sgn}(a_{j_l})e^{iE_0(t_l-t_l')}(\mu_{R,l,k}+i\mu_{I,l,k})$. 
\State Define $\hat{E}(E_0) \equiv \frac{\hat{\bar{H}}(E_0)}{\hat{\openone}(E_0)}$. 
\State Output $\min_{E_0} \hat{E}(E_0)$ as the result of the ground-state energy. 
\end{algorithmic}
}
\end{algorithm}
\end{minipage}
\end{figure*}

In this section, we present numerical results about error resilience. First, we compare three computation tasks, RTS, ITS and GSS. We show that the impact of Trotterisation errors is smaller in ITS and GSS compared with RTS. Then, we study the accuracy of GSS when the system size increases. We find that a Trotter step number scales linearly with the system size is sufficient for computing the ground-state energy of the one-dimensional transverse-field Ising model. Finally, we compare our GSS algorithm to QPE. For the water molecule, our algorithm reduces the Trotter step number from $6\times 10^5$ in QPE to only four. Similar results are also obtained with the one-dimensional transverse-field Ising model. 

Instead of the iterative method introduced for rigorous complexity analysis, we propose an alternative method to determine $E_0$ via the variational principle and optimisation. This method may be more relevant to practical implementation than the iterative method. Our algorithm effectively prepares the state $G(H)\ket{\Psi(0)}$, and its energy $E(\beta,\tau,E_0) = \mean{\bar{H}}_G(\beta)$ is a function of parameters $\beta$, $\tau$ and $E_0$. According to the variational principle $E(\beta,\tau,E_0)\geq E_g$, we can solve the ground state by minimising the energy, i.e.~we take $E_{min} = \min_{E_0}E(\beta,\tau,E_0)$. In this paper, we focus on varying $E_0$, and in general, we can also vary $\beta$ and $\tau$ to minimise the energy. 

We have the following remarks on the optimisation method. First, $G(H)\ket{\Psi(0)}$ as a trial wavefunction is capable of expressing the ground state: When the Trotter step number $N_t$ increases, $E_{min}$ always approaches $E_g$. The reason is that the operator $G(H)$ is a projection onto the ground state. As shown in Sec.~\ref{sec:GSS}, the state $G(H)\ket{\Psi(0)}$ can approach the ground state when $N_t$ scales polynomially (or polylogarithmically if there is a gap). Second, the one-parameter optimisation is technically trivial by a grid search. Third, we can implement the quantum computing only for $E_0 = 0$ and then generate results for all $E_0$, see Algorithm~\ref{alg:GSSE0}. Therefore, the optimisation is entirely on the classical computer without iteratively querying the quantum computer, and we can take a high grid resolution without increasing the quantum-computing cost. Note that variational quantum algorithms~\cite{Peruzzo2014, Wecker2015, McArdle2019, Motta2019, Lin2021} usually involve a feedback loop between quantum and classical computers to update parameterised quantum circuits. Our optimisation method only uses one-way communications from the sample generator to the quantum computer to the Monte Carlo estimator, see Fig.~\ref{fig:scheme}. This one-way feature removes the feedback loop and potential latency. 

In the numerical study, we focus on the optimisation method and Trotter formula instead of the iterative method and LOR formula. Quantum subspace diagonalisation (QSD) or quantum Lanczos~\cite{McClean2017, Motta2019, Parrish2019, Stair2020, Epperly2022} is a way to reduce the error in GSS, and we also demonstrate this method in the numerical study. Following Ref.~\cite{Motta2019}, we choose a set of imaginary times $\{\beta_a\st a=1,2,\ldots,d\}$ to generate a subspace. We evaluate matrices $A_{a,b} = \mean{\openone}(-i\beta_b,i\beta_a)$ and $B_{a,b} = \mean{\bar{H}}(-i\beta_b,i\beta_a)$. Given the unitary diagonalisation $\Lambda = U^\dag AU$, we have the effective Hamiltonian of the subspace $H_{eff} = V^\dag B V$, where $V = U\sqrt{\Lambda^{-1}}$. The ground-state energy of $H_{eff}$ is taken as the result of QSD. Furthermore, $H_{eff}$ depends on $E_0$: We vary $E_0$ to minimise the ground-state energy of $H_{eff}$, and we take the minimum ground-state energy as the final result. 

\subsection{Comparison of three tasks}

\begin{figure*}[t]
\begin{center}
\includegraphics[width=1\linewidth]{\figpath/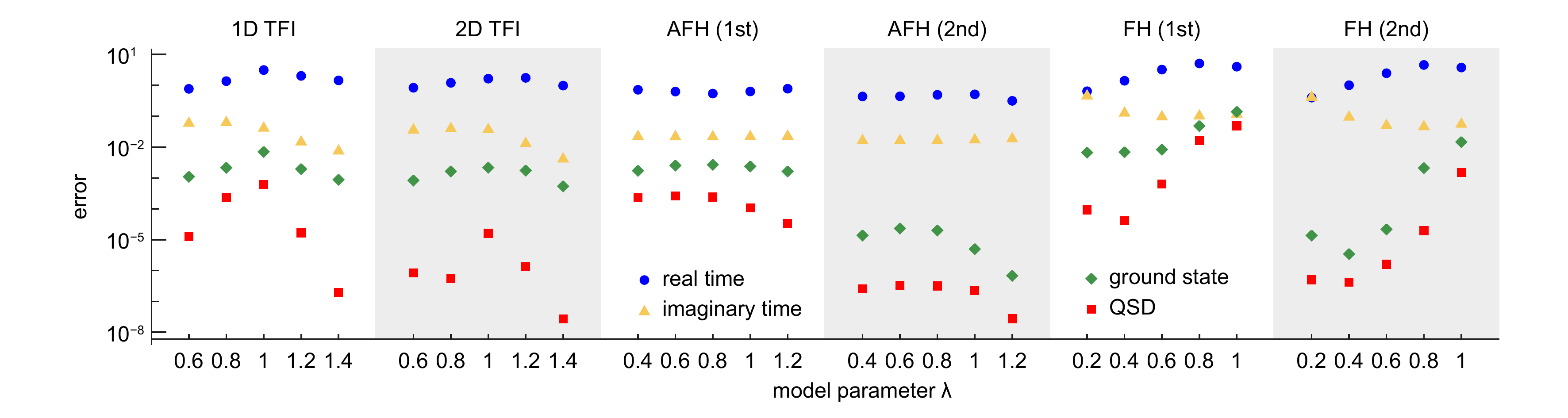}
\caption{
Errors $\epsilon_R$, $\epsilon_I$ and $\epsilon_G$ in quantum simulation. Results of the ten-spin one-dimensional transverse-field Ising (1D TFI), $3\times 3$ two-dimensional transverse-field Ising (2D TFI), ten-spin anti-ferromagnetic Heisenberg (AFH) and six-site (12-qubit) Fermi-Hubbard (FH) models are obtained in numerical simulation. The first-order Trotter formula is implemented for all the models, and the second-order Trotter formula is implemented for AFH and FH models. In numerical simulation, we take $N_t = 20$, $T = 4$ for the FH model and $T = 3$ for other models, $\tau = 2T$, $E_0 = E_g$ in real- and imaginary-time simulations and $\beta = T$ in ground-state solver. In quantum subspace diagonalisation (QSD), we take $d = 8$ and $\beta_a = aT/d$. 
}
\label{fig:models}
\end{center}
\end{figure*}

\begin{figure}[tbp]
\begin{center}
\includegraphics[width=0.75\linewidth]{\figpath/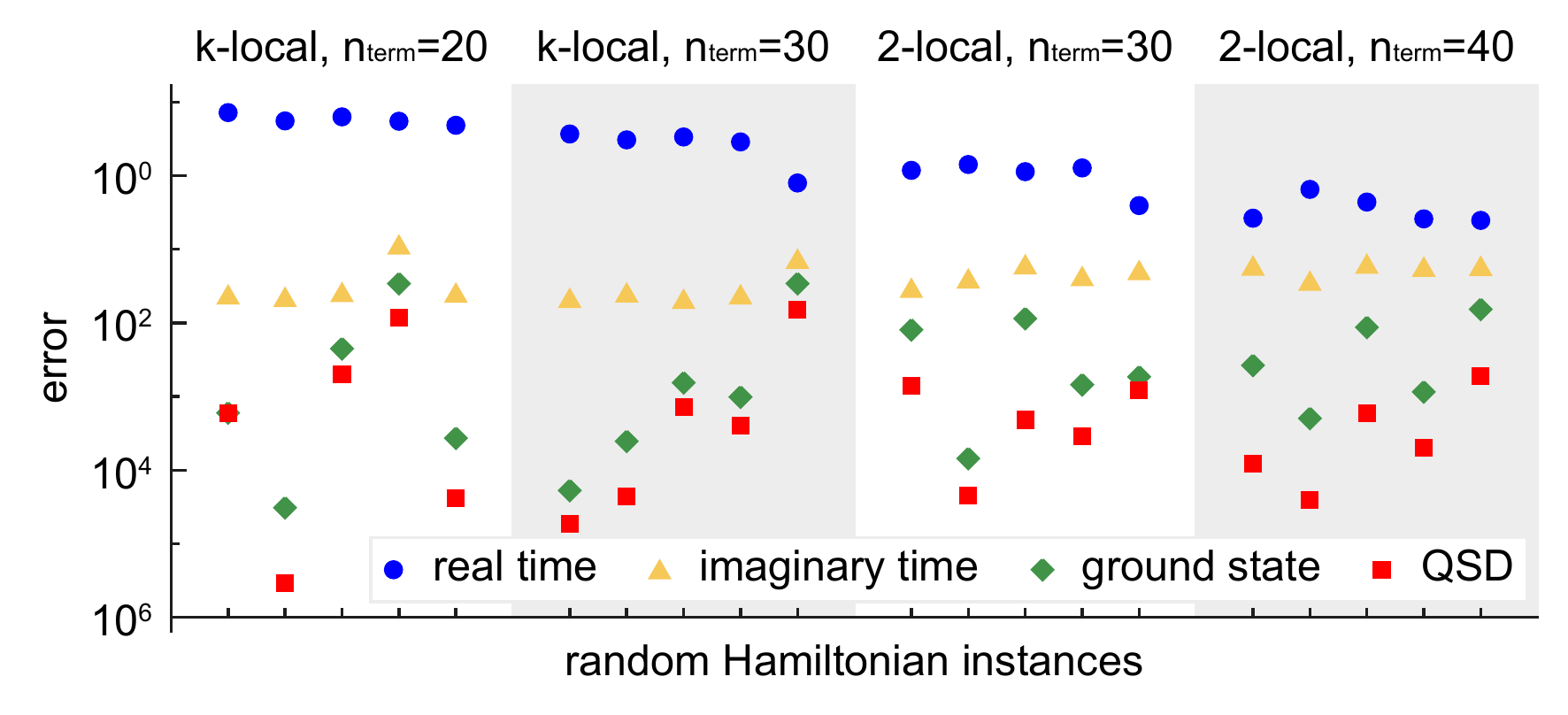}
\caption{
Errors $\epsilon_R$, $\epsilon_I$ and $\epsilon_G$ in quantum simulation of randomly generated $k$-local and 2-local Hamiltonians. We take $n_{spin} = 10$, $N_t = 20$, $T = 6$, $\tau = 2T$, $E_0 = E_g$ in real- and imaginary- time simulations and $\beta = T$ in ground-state solver. In quantum subspace diagonalisation (QSD), we take $d = 8$ and $\beta_a = aT/d$. The second-order Trotter formula is used in the simulation. 
}
\label{fig:random}
\end{center}
\end{figure}

To demonstrate the error resilience in our ITS and GSS algorithms, we compare the impacts of Trotterisation errors in three tasks: RTS, ITS and GSS. Specifically, we compute the two-time correlation $\mean{\bar{H}}(t,t')$ in RTS, the expected value of energy $\mean{\bar{H}}(\beta)$ in ITS and the ground-state energy $E_g$ in GSS. Using the first-order Trotter formula, results with Trotterisation errors are denoted by $\mean{\bar{H}}'(t,t')$, $\mean{\bar{H}}'(\beta)$ and $E'_g$, respectively: 
\begin{eqnarray}
\mean{\bar{H}}'(t,t') &=& e^{iE_0(t-t')}\bra{\Psi(0)}\tilde{U}(t')^\dag \bar{H}\tilde{U}(t)\ket{\Psi(0)},
\end{eqnarray}
\begin{eqnarray}
\mean{\bar{H}}'(\beta) = \frac{\bra{\Psi(0)}\tilde{G}(H) \bar{H}\tilde{G}(H)\ket{\Psi(0)}}{\bra{\Psi(0)}\tilde{G}(H)\tilde{G}(H)\ket{\Psi(0)}},
\end{eqnarray}
and $E'_g$ depends on the method. Two methods are considered. The first method is the optimisation of $E_0$, i.e.~$E'_g = \min_{E_0}\mean{\bar{H}}'(\beta)$; Note that $\mean{\bar{H}}'(\beta)\approx E(\beta,\tau,E_0)$ is a function of $E_0$. The second method is QSD incorporating the optimisation of $E_0$, in which imaginary-time correlations $A_{a,b}$ are $B_{a,b}$ are replaced by their Trotterisation approximations. In numerical calculations, we neglect implementation errors in quantum computing, i.e.~errors due to imperfect quantum gates and statistical errors. We have analysed statistical errors in Secs.~\ref{sec:variance}. 

We demonstrate the error resilience with various quantum many-body models, including the one- and two-dimensional transverse-field Ising models, anti-ferromagnetic Heisenberg model, Fermi-Hubbard model and randomly generated Hamiltonians. Several different parameter values (denoted by $\lambda$) are taken in each model. See Appendix~\ref{app:details} for details of these models and numerical calculations. 

First, we illustrate the comparison by taking the ten-spin one-dimensional transverse-field Ising model as an example. We take the model parameter $\lambda = 1.2$ (coupling strengths are on the order of one), the maximum evolution time $T = 3$, $\tau = 2T$ and the Trotter step number $N_t = 20$. The two-time correlation evaluated using the first-order Trotter formula is shown in Fig.~\ref{fig:scheme}(a). Data on the diagonal line, i.e.~$\mean{\bar{H}}(t,-t)$, are redrawn in Fig.~\ref{fig:scheme}(b) for a comparison between correlations with and without Trotterisation errors. We can find that the error in $\mean{\bar{H}}(t,-t)$ is already comparable to the variation of the function itself, i.e.~the Trotter step number is inadequate for even approximate RTS. Using these inaccurate RTS data, we can compute the expected value of energy in the imaginary-time evolution, and the result (assuming $E_0 = E_g$) is shown in Fig.~\ref{fig:scheme}(c). We can find that the ITS is still relatively accurate. As shown in Fig.~\ref{fig:scheme}(e), the minimum expected value of energy (for $\beta = T$) is close to the ground-state energy. 

For a quantitative comparison, we consider three quantities: the average error in RTS, $\epsilon_R = \frac{1}{T}\int_0^T dt \abs{\mean{\bar{H}}'(t,-t)-\mean{\bar{H}}(t,-t)}$; the average error in ITS, $\epsilon_I = \frac{1}{T}\int_0^T d\beta \abs{\mean{\bar{H}}'(\beta)-\mean{\bar{H}}(\beta)}$; and the error in GSS, $\epsilon_G = E_g' - E_g$. 

Numerical results show that with the same Trotter step number, the error in GSS after the optimisation of $E_0$ is smaller than the average error in ITS, and the average error in ITS is smaller than the average error in RTS. For the ten-spin one-dimensional transverse-field Ising model with $\lambda = 1.2$, we have $\epsilon_R=1.4$, $\epsilon_I=0.062$ and $\epsilon_G=0.0021$. We can find that $\epsilon_I$ is much smaller than $\epsilon_R$, and $\epsilon_G$ is much smaller than $\epsilon_R$. This observation holds in various models as shown in Fig.~\ref{fig:models}. We can find that $\epsilon_I$ is smaller than $\epsilon_R$ by a factor of $10$ to $200$ in almost all cases, except the Fermi-Hubbard model with $\lambda = 0.2$, and $\epsilon_G$ is smaller than $\epsilon_R$ by a factor of $30$ to $10^5$. We also show that QSD incorporating the optimisation of $E_0$ can significantly reduce the error. Similar results are obtained with randomly generated Hamiltonians as shown in Fig.~\ref{fig:random}. 

Intuitively, the error in GSS should be comparable to the error in ITS because the ground-state energy is computed with ITS. However, the error in GSS is actually smaller. There are two reasons for this result. First, the error in ITS is large when $\beta$ is small, which causes a relatively large average error. See the inset of Fig.~\ref{fig:scheme}(b). There are two sources of the error: the integral formula [i.e.~the error $G(H)-e^{-\beta H}$] and Trotterisation. Usually, the Trotterisation error increases with time. On the contrary, the formula error decreases with $\beta$. The operator $G(H)$ is a projection onto the ground state when $\beta$ is large (assume $E_0\leq E_g$). Therefore, when $\beta$ is large, the difference between $G(H)$ and $e^{-\beta H}$ is small because they are both projections. The error at a small $\beta$ is mainly due to the integral formula. In a finite interval, this error decreases with $\beta$; If $\beta$ is too large, the error increases with $\beta$ because of Trotterisation. Second, taking the optimal $E_0$ in GSS reduces the error. 

\subsection{Scaling with the system size}

\begin{figure}[tbp]
\begin{center}
\includegraphics[width=0.75\linewidth]{\figpath/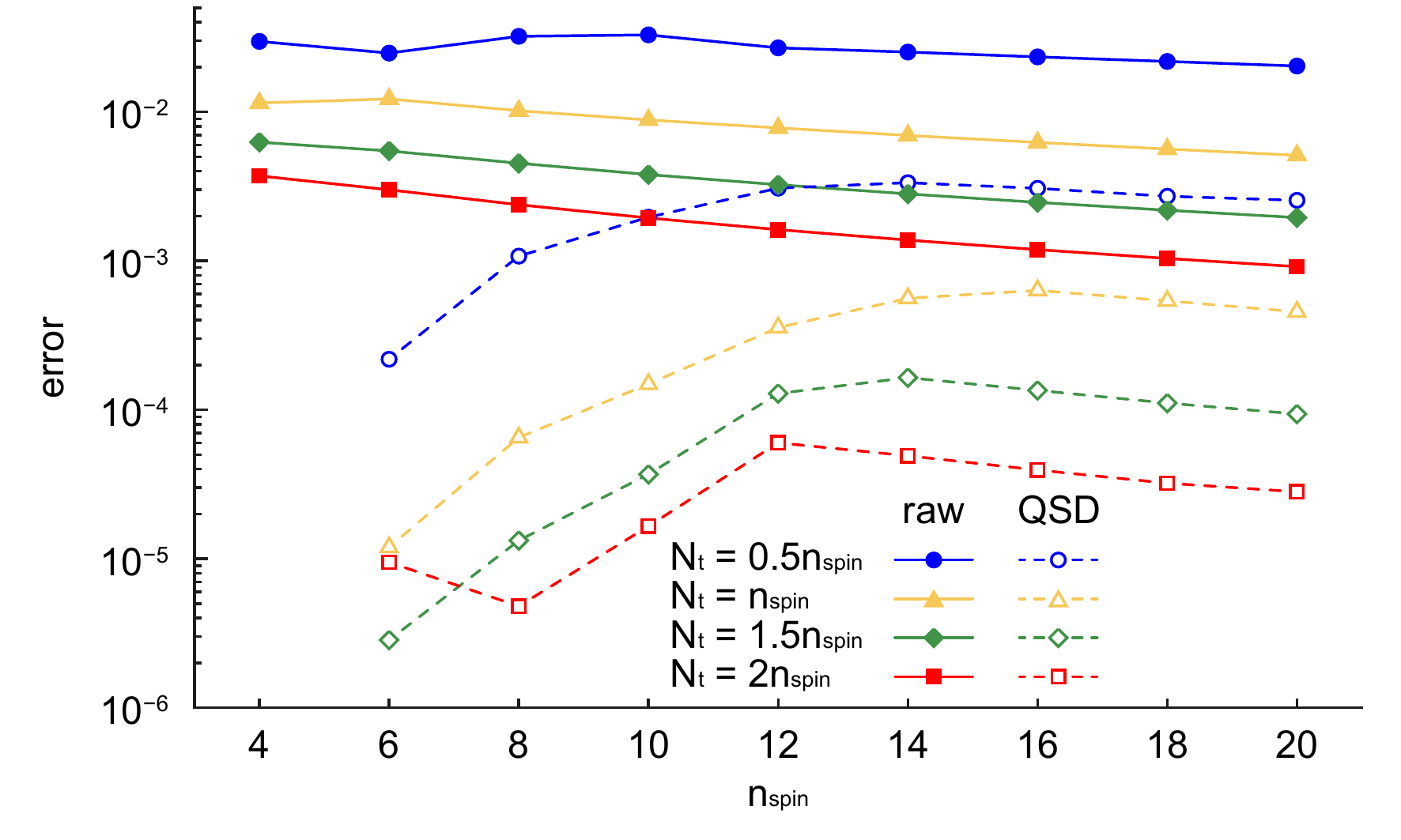}
\caption{
Errors in the ground-state energy of the one-dimensional transverse-field Ising model. Algorithms without and with quantum subspace diagonalisation (QSD) are both implemented. We take $\lambda = 1.2$, $T = 3$, $\tau = 2T$ and $\beta = T$. In QSD, we take $d = 8$ and $\beta_a = aT/d$. With QSD, the error is $0.0025$ when $n_{spin} = 20$ and $N_t = 10$. 
}
\label{fig:scaling}
\end{center}
\end{figure}

We use the one-dimensional transverse-field Ising model to test the scaling behaviour of our GSS algorithm and infer the impact of Trotterisation errors in hundred-qubit quantum computing. We increase the number of spins $n_{spin}$ and take the number of Trotter steps $N_t = rn_{spin}$. The numerical result shows a trend that the error in the ground-state energy decreases with $n_{spin}$ when $r$ is a fixed constant, see Fig.~\ref{fig:scaling}. Accordingly, considering $N_t = 50$ for $n_{spin} = 100$, we can infer that the error is smaller than $0.0025$, which is much smaller than the energy gap $\sim 0.8$ between the ground and first-excited states (given the model parameter $\lambda = 1.2$). Therefore, we can solve a hundred-qubit ground-state problem and achieves a relatively small error with tens of Trotter steps. 

\subsection{Comparison to quantum phase estimation}

Both our GSS algorithm and QPE are based on real-time evolution. QPE is a Fourier transformation of the time-dependent state $\ket{\Psi(t)} = e^{-i\bar{H}t}\ket{\Psi(0)}$, denoted by $\ket{\Psi(\omega)}$. The probability distribution $\normLR{\ket{\Psi(\omega)}}^2$ as a function of $\omega$ peaks up at eigenvalues of $\bar{H}$, and eigenvalues are extracted by detecting the peaks (realised by quantum Fourier transformation). Time series analysis type algorithms work in a similar way~\cite{OBrien2019, Somma2019, Roggero2020, Russo2021, Lu2021, Wan2022}. In our algorithm, the function $\normLR{G(H)\ket{\Psi(0)}}^2$ [which corresponds to the denominator in Eq.~(\ref{eq:Obeta})] has a similar property to $\normLR{\ket{\Psi(\omega)}}^2$. Here, $E_0$ plays the role of $\omega$: Thinking of the limiting case $\tau\rightarrow \infty$, $G(H)$ becomes $e^{-\beta\absLR{\bar{H}-E_0\openone}}$, therefore, $\normLR{G(H)\ket{\Psi(0)}}^2$ reaches a maximum when $E_0$ takes an eigenvalue. In the limit of a large Trotter step number $N_t$, all these algorithms can produce an accurate result at a polynomial cost. Next, we show numerical evidence that our algorithm is much more robust than QPE when $N_t$ is small. 

\begin{figure}[tbp]
\begin{center}
\includegraphics[width=0.75\linewidth]{\figpath/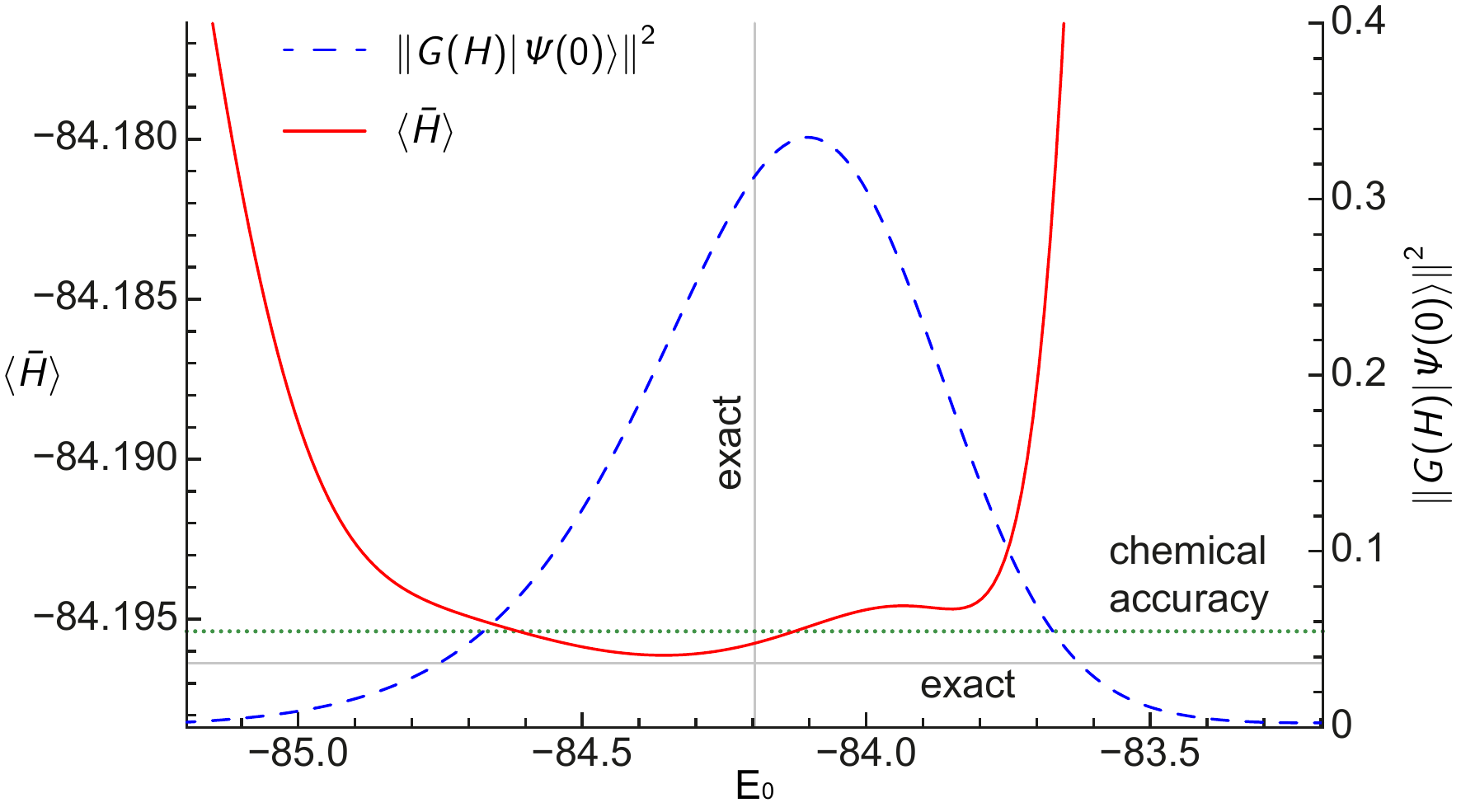}
\caption{
Energy $\mean{\bar{H}}(\beta)$ and distribution $\normLR{G(H)\ket{\Psi(0)}}^2$ of the water molecule encoded into fourteen qubits. In the algorithm, we take $\beta = 3~E_{\rm h}^{-1}$ and $\tau = 2\beta$, where $E_{\rm h}$ is one hartree. 
}
\label{fig:H2O}
\end{center}
\end{figure}

To demonstrate robustness, we consider the water molecule encoded into fourteen qubits. We plot the energy $\mean{\bar{H}}(\beta)$ and its normalisation factor $\normLR{G(H)\ket{\Psi(0)}}^2$ in Fig.~\ref{fig:H2O}. We can find that $\normLR{G(H)\ket{\Psi(0)}}^2$ has a peak around the exact $E_g$. If we take $E_0$ at the maximum of the peak as the ground-state energy, the error is about $0.09$ hartree. The energy $\mean{\bar{H}}(\beta)$ has a minimum. If we take the minimum value as the ground-state energy (i.e.~the $E_0$ optimisation method), the error is about $0.00026$ hartree. The error in the minimum-energy approach is below the chemical accuracy (about 1 millihartree) and smaller than the error in the maximum-peak approach for two orders of magnitude. In the calculation, we use the second-order Trotter formula and take $N_t = 4$ Trotter steps. See Appendix~\ref{app:details} for results of the first-order Trotter formula and other step numbers. For a direct comparison, QPE achieves the chemical accuracy with the step number $N_t \approx 6\times 10^5$ for the same molecule~\cite{Wecker2014}. 

\begin{table*}[tbp]
\begin{center}
\begin{tabular}{|c|c|c||c|c|c|}
\hline
$\epsilon_G$ & QCMC (raw) & QPE & $\epsilon_G$ & QCMC (QSD) & QPE \\
\hline
$2.0\times 10^{-2}$ & 10 & $2.4\times 10^3$ & $2.5\times 10^{-3}$ & 10 & $5.3\times 10^4$ \\
\hline
$5.1\times 10^{-3}$ & 20 & $1.9\times 10^4$ & $4.6\times 10^{-4}$ & 20 & $7.0\times 10^5$ \\
\hline
$2.0\times 10^{-3}$ & 30 & $7.9\times 10^4$ & $9.4\times 10^{-5}$ & 30 & $7.5\times 10^{6}$ \\
\hline
$9.1\times 10^{-4}$ & 40 & $2.5\times 10^5$ & $2.8\times 10^{-5}$ & 40 & $4.6\times 10^{7}$ \\
\hline
\end{tabular}
\end{center}
\caption{
Error $\epsilon_G$ and corresponding Trotter step numbers in our quantum-circuit Monte Carlo (QCMC) algorithm [without and with quantum subspace diagonalisation (QSD)] and the quantum phase estimation (QPE) algorithm. Here, we list the result of the one-dimensional transverse-field Ising model with $\lambda = 1.2$ and $n_{spin} = 20$. We use the first-order Trotter formula for both algorithms. 
}
\label{table:TSN}
\end{table*}

In addition to the water molecule, we also observe the robustness to Trotterisation errors in computing the ground state of the transverse-field Ising model. In Table~\ref{table:TSN}, we list the error in the ground-state energy and corresponding Trotter step numbers for the model with $n_{spin} = 20$ spins. With $N_t = 40$ Trotter steps in our algorithm, errors in energy are $9.1\times 10^{-4}$ and $2.8\times 10^{-5}$ depending on whether QSD is used. To achieve the same energy resolutions using QPE, Trotter step numbers $2.5\times 10^5$ and $4.6\times 10^{7}$ are required, respectively. Therefore, the circuit depth of our algorithm is thousands to a million times shallower. 

\section{Conclusions}

This paper proposes a quantum algorithm for simulating the imaginary-time evolution by sampling random quantum circuits. By analysing how the finite circuit depth impacts the accuracy, we find that our algorithm is resilient to Trotterisation errors caused by the finite circuit depth. The error resilience is demonstrated in two ways, complexity analysis and numerical simulation. In the complexity analysis, we find that the circuit depth scales polylogarithmically with the desired accuracy. We have this superior scaling behaviour owing to the LOR formula and Gaussian function in the integral. In the LOR formula, Trotterisation errors are corrected by random Pauli operators leading to an exact real-time evolution operator. With the Gaussian function, the truncation time scales polylogarithmically with the desired accuracy. These two factors together result in the polylogarithmically-scaling circuit depth. Based on the simulation of imaginary time, our algorithm for solving the ground-state problem inherits the resilience to Trotterisation errors. If there is a finite energy gap above the ground state, the circuit depth for solving the ground state also scales polylogarithmically with the desired accuracy. This energy gap is a finite energy difference between the ground state and first-excited state, which exists in many finite-size Hamiltonians, rather than a finite gap in the limit of large system size (a stronger condition). In the numerical simulation, we directly compare our algorithm to the QPE algorithm and find that the circuit depth is thousands of times smaller in our algorithm. This reduction in the circuit depth can be explained by analysing the Trotterisation error in the frequency space. Optimising the algorithm parameter $E_0$ and utilising QSD can further reduce the impact of Trotterisation errors. 

In this paper, we focus on applying our algorithm for computing the ground state. We can also use imaginary-time evolution to study finite-temperature properties~\cite{Liu2018, He2019}. Constructing the imaginary-time evolution operator according to the Monte Carlo method, our algorithm can be used as a subroutine and combined with conventional projector QMC algorithms, such as Green's function Monte Carlo and ancillary-field Monte Carlo~\cite{Haaf1995, Motta2018}. In this way, we may further reduce the circuit depth by using more classical computing techniques and resources. Our algorithm can be completely explicit as in the iterative approach or includes a minimised variational computing, i.e.~the optimisation of $E_0$. Compared with quantum variational algorithms~\cite{Peruzzo2014, Wecker2015, McArdle2019, Motta2019, Lin2021}, our optimisation is in a one-parameter space and implemented entirely on the classical computer without involving quantum computing. It is worth noting that variational principles are efficient tools for maximising the power of shallow circuits. We can think of a variational Monte Carlo algorithm in which we optimise the distribution of circuits $g(t)$ rather than circuit parameters. The distribution function in the Monte Carlo quantum simulation provides a new dimension to explore to develop efficient quantum algorithms in the NISQ era. As we show in this paper, the problem caused by shallow circuits can be solved to a large extent by using Monte Carlo methods in quantum computing. 

\begin{acknowledgments}
We acknowledge the use of simulation toolkit QuESTlink~\cite{Jones2020} for this work. 
This work is supported by the National Natural Science Foundation of China (Grant No. 11574028, 11874083 and 11875050). YL is also supported by NSAF (Grant No. U1930403). 
\end{acknowledgments}

\appendix

\section{Integral formula}
\label{app:Integral}

The integral formula reads 
\begin{eqnarray}
G(H) = \int_{-\infty}^{\infty} dt g(t) e^{-iH t}.
\end{eqnarray}
First, we apply the spectral decomposition to the Hamiltonian, and we get 
\begin{eqnarray}
H = \sum_\omega \omega \ketbra{\omega}{\omega},
\end{eqnarray}
where $\{\omega\}$ are eigenvalues of the Hamiltonian, which are real, and $\{\ket{\omega}\}$ are orthonormal vectors. Then, the real-time evolution operator reads 
\begin{eqnarray}
e^{-iHt} = \sum_\omega e^{-i\omega t} \ketbra{\omega}{\omega},
\end{eqnarray}
The integral formula becomes 
\begin{eqnarray}
G(H) = \sum_\omega G(\omega) \ketbra{\omega}{\omega},
\end{eqnarray}
and 
\begin{eqnarray}
G(\omega) = \int_{-\infty}^{\infty} dt g(t) e^{-i\omega t}.
\end{eqnarray}
According to the matrix 2-norm, the error in the integral formula is 
\begin{eqnarray}
\norm{G(H)-e^{-\beta H}}_2 = \max_\omega \abs{G(\omega)-e^{-\beta\omega}}.
\end{eqnarray}
Here, we have used that 
\begin{eqnarray}
e^{-\beta H} = \sum_\omega e^{-\beta\omega} \ketbra{\omega}{\omega}.
\end{eqnarray}

For the Lorentz-Gaussian function, we have 
\begin{eqnarray}
G(\omega) &=& \int_{-\infty}^{\infty} dt \frac{1}{\pi}\frac{\beta}{\beta^2+t^2}e^{-\frac{\beta^2 + t^2}{2\tau^2}} e^{-i\omega t} = G_+(\omega) + G_-(\omega),
\end{eqnarray}
where 
\begin{eqnarray}
G_\eta(\omega) &=& \int_{-\infty}^{\infty} dt \frac{i\eta}{2\pi} \frac{1}{i\eta\beta - t} e^{-\frac{\beta^2 + t^2}{2\tau^2}-i\omega t}.
\end{eqnarray}
Next, we use the residue theorem to evaluate this integral. We consider the contour in the complex plane $-T+i0 \rightarrow T+i0 \rightarrow T-i\omega\tau^2 \rightarrow -T-i\omega\tau^2 \rightarrow -T+i0$, where $T \rightarrow +\infty$. 

When $\beta+\eta\omega\tau^2>0$, we have 
\begin{eqnarray}
G_\eta(\omega) &=& e^{-\frac{\beta^2 + \omega^2\tau^4}{2\tau^2}} \frac{i\eta}{2\pi} \int_{-\infty}^{\infty} dt \frac{1}{i\eta(\beta+\eta\omega\tau^2) - t} e^{-\frac{t^2}{2\tau^2}} \notag \\
&=& \frac{1}{2} e^{-\frac{\beta^2 + \omega^2\tau^4}{2\tau^2}} e^{\frac{(\beta+\eta\omega\tau^2)^2}{2\tau^2}} {\rm erfc}\left(\frac{\beta+\eta\omega\tau^2}{\sqrt{2}\tau}\right) = \frac{1}{2} e^{\eta\beta\omega} {\rm erfc}\left(\frac{\beta+\eta\omega\tau^2}{\sqrt{2}\tau}\right).
\end{eqnarray}
Here, we have used properties of the Faddeeva function. When $\beta+\eta\omega\tau^2=0$, we have 
\begin{eqnarray}
G_\eta(\omega) = \frac{1}{2}e^{\eta\beta\omega} = \frac{1}{2}e^{\eta\beta\omega} {\rm erfc}\left(\frac{\beta+\eta\omega\tau^2}{\sqrt{2}\tau}\right).
\end{eqnarray}
When $\beta+\eta\omega\tau^2<0$, we have 
\begin{eqnarray}
G_\eta(\omega) &=& e^{\eta\beta\omega} - \frac{1}{2} e^{\eta\beta\omega} {\rm erfc}\left(-\frac{\beta+\eta\omega\tau^2}{\sqrt{2}\tau}\right) = \frac{1}{2} e^{\eta\beta\omega} {\rm erfc}\left(\frac{\beta+\eta\omega\tau^2}{\sqrt{2}\tau}\right).
\end{eqnarray}
Here, we have used that ${\rm erfc}(x)+{\rm erfc}(-x) = 2$. Therefore, for all cases, we have 
\begin{eqnarray}
G_\eta(\omega) &=& \frac{1}{2} e^{\eta\beta\omega} {\rm erfc}\left(\frac{\beta+\eta\omega\tau^2}{\sqrt{2}\tau}\right).
\end{eqnarray}

Because $g(t)\geq 0$, the normalisation factor is $C = G(0) = {\rm erfc}(\frac{\beta}{\sqrt{2}\tau})$. 

To derive the error in the integral formula, we consider $\beta-\omega\tau^2<0$. Using ${\rm erfc}(x) \leq e^{-x^2}$ when $x\geq 0$, we have 
\begin{eqnarray}
\abs{G(\omega)-e^{-\beta\omega}} &\leq & \frac{1}{2} e^{\beta\omega} e^{-\frac{(\beta+\omega\tau^2)^2}{2\tau^2}} + \frac{1}{2} e^{-\beta\omega} e^{-\frac{(\beta-\omega\tau^2)^2}{2\tau^2}} = e^{-\frac{\beta^2+\omega^2\tau^4}{2\tau^2}}. 
\label{eq:Gerror}
\end{eqnarray}
When $\Delta E\geq \frac{\beta}{\tau^2}$, we have $\beta-\omega\tau^2<0$ for all $\omega$. Then, 
\begin{eqnarray}
\norm{G(H)-e^{-\beta H}}_2 \leq e^{-\frac{1}{2}(\Delta E^2\tau^2+\frac{\beta^2}{\tau^2})} \leq \gamma_G,
\end{eqnarray}
where $\gamma_G \equiv e^{-\frac{\Delta E^2\tau^2}{2}}$ is the upper bound of the error due to the finite $\tau$. 

\section{Circuit}
\label{app:circuit}

\begin{figure}[tbp]
\begin{center}
\includegraphics[width=0.75\linewidth]{\figpath/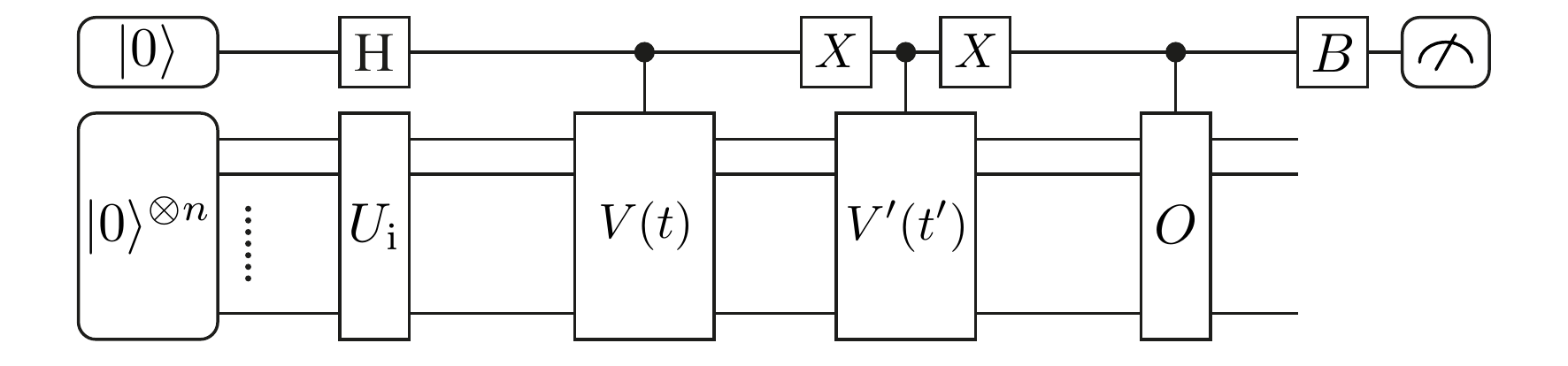}
\caption{
Circuit for evaluating $\overline{\mean{O}}(t,t')$. We assume that $O$ is a unitary operator. Unitary operators $V(t) = e^{-i\bar{H}t}$ and $V'(t') = e^{-i\bar{H}t'}$. To measure $\bra{\Psi(0)}U_{\bfs'}(t')^\dag O U_{\bfs}(t)\ket{\Psi(0)}$, these two unitary operators are replaced by $V(t) = U_{\bfs}(t)$ and $V'(t') = U_{\bfs'}(t')$. 
}
\label{fig:circuit}
\end{center}
\end{figure}

We can evaluate $\mean{O}(t,t')$ with or without an ancillary qubit. The protocol with an ancillary qubit works for the general case, and the protocol without the ancillary qubit only works under certain conditions. We present both protocols in this section, however, we focus on the general-case protocol in the complexity analysis. 

The circuit for the general-case protocol is shown in Fig.~\ref{fig:circuit}, which is adapted from Refs.~\cite{Ekert2002}. We assume that $O$ is a unitary operator, e.g.~a Pauli operator. For a general operator, we can express it as a linear combination of unitary operators and measure each term. In the circuit, the gate $U_i$ prepares the initial state, i.e.~$\ket{\Psi(0)} = U_i \ket{0}^{\otimes n}$. The top qubit is the ancillary qubit. The gate $B$ is for adjusting the measurement basis: $B^\dag Z B = X$ and $B^\dag Z B = Y$ to measure $X$ and $Y$ Pauli operators of the ancillary qubit, respectively. Let $\mean{X}$ and $\mean{Y}$ be expected values of ancillary-qubit Pauli operators evaluated using the circuit, then $\overline{\mean{O}}(t,t') = \mean{X}+ i\mean{Y}$. Therefore, the measurement outcome of $X$ is $\mu_R$, and the measurement outcome of $Y$ is $\mu_I$. 

Adapted from protocols in Refs.~\cite{Lu2021, OBrien2021}, the ancillary-qubit-free protocol has three steps: i) Prepare the state $\ket{\tilde{+}} = \frac{1}{\sqrt{2}}(\ket{\Psi(0)}+\ket{\Psi_r})$, where $\ket{\Psi_r}$ is a reference state; ii) Apply the transformation $U = e^{i\bar{H}t'}Oe^{-i\bar{H}t}$; iii) Measure $\tilde{X} = \ketbra{\Psi_r}{\Psi(0)} + \ketbra{\Psi(0)}{\Psi_r}$ and $\tilde{Y} = -i\ketbra{\Psi_r}{\Psi(0)} + i\ketbra{\Psi(0)}{\Psi_r}$. We have 
\begin{eqnarray}
&& \Tr\left( \frac{\tilde{X}+i\tilde{Y}}{2} U \ketbra{\tilde{+}}{\tilde{+}} U^\dag \right) \notag \\
&=& \frac{1}{2} \left( \bra{{\Psi(0)}} U \ket{\Psi(0)} + \bra{{\Psi(0)}} U \ket{\Psi_r} \right) \left( \bra{\Psi(0)} U^\dag \ket{\Psi_r} + \bra{\Psi_r} U^\dag \ket{\Psi_r} \right).
\end{eqnarray}
This protocol works if $\bra{\Psi(0)}U\ket{\Psi_r}$, $\bra{\Psi(0)}U^\dag\ket{\Psi_r}$ and $\bra{\Psi_r}U^\dag\ket{\Psi_r}$ are known. By solving the equation, we can obtain $\bra{{\Psi(0)}} U \ket{\Psi(0)}$. 

For fermion systems with particle number conservation, the ancillary-qubit-free protocol usually works. As proposed in Ref.~\cite{OBrien2021}, we can choose the vacuum state as the reference state. If $U = e^{i\bar{H}t'}Oe^{-i\bar{H}t}$ is realised with Trotterisation, we need to take into account Trotterisation errors. Therefore, we need to implement Trotterisation in the following way: We express the Hamiltonian in the summation form $\bar{H} = \sum_{j=1}^M H_j$ for Trotterisation, and each term preserves the particle number. Additionally, it is required that the unitary operator $O$ preserves the particle number, otherwise we can express it as a linear combination of particle-number-preserving unitary operators and measure each term. Then, we have $\bra{\Psi_r}U^\dag\ket{\Psi_r} = e^{i\phi}$, where the phase $\phi$ usually can be computed analytically. Usually the initial state has non-zero particles, i.e.~$\braket{\Psi(0)}{\Psi_r} = 0$, then $\bra{\Psi(0)}U\ket{\Psi_r} = \bra{\Psi(0)}U^\dag\ket{\Psi_r} = 0$. 

\section{Proof of Theorem~\ref{th:ITS}}
\label{app:proofITS}

First, we analyse the total errors in estimators of $\mean{O}(-i\beta,i\beta)$ and $\mean{O}(\beta)$, respectively. Then, we give the proof of Theorem~\ref{th:ITS}. 

\begin{lemma}
$\hat{O}$ is the estimate of $\mean{O}_{G_T}(-i\beta,i\beta)$ according to Algorithm~\ref{alg:ITCqcmc}. Let $\delta$ be a positive number. If $E_g-E_0\geq \frac{\beta}{\tau^2}$, the inequality 
\begin{eqnarray}
\absLR{\hat{O} - \mean{O}(-i\beta,i\beta)} < a_O\epsilon
\end{eqnarray}
hold with the probability $1-P$, where $P \leq \frac{2C_T^4}{N_s\delta^2}$ and 
\begin{eqnarray}
\epsilon \equiv \gamma_G(2+\gamma_G)+\gamma_T(2+\gamma_T) + \delta.
\end{eqnarray}
\label{le:Obb}
\end{lemma}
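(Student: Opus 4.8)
The plan is to interpolate between $\hat{O}$ and $\mean{O}(-i\beta,i\beta)$ through the two intermediate quantities that the algorithm actually targets: $\mean{O}_{G_T}(-i\beta,i\beta)$, which $\hat{O}$ estimates in expectation, and $\mean{O}_G(-i\beta,i\beta)$, the untruncated integral. By the triangle inequality,
\begin{eqnarray}
\absLR{\hat{O} - \mean{O}(-i\beta,i\beta)} &\leq& \absLR{\hat{O} - \mean{O}_{G_T}(-i\beta,i\beta)} + \absLR{\mean{O}_{G_T}(-i\beta,i\beta) - \mean{O}_G(-i\beta,i\beta)} \notag \\
&& + \absLR{\mean{O}_G(-i\beta,i\beta) - \mean{O}(-i\beta,i\beta)}.
\end{eqnarray}
The last two terms are deterministic and will be bounded by $a_O\gamma_T(2+\gamma_T)$ and $a_O\gamma_G(2+\gamma_G)$; the first is the statistical error, controlled via Chebyshev's inequality and contributing $a_O\delta$ on an event of probability $1-P$. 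Summing the three bounds reproduces exactly $a_O\epsilon$ with $\epsilon = \gamma_G(2+\gamma_G)+\gamma_T(2+\gamma_T)+\delta$.

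For the deterministic terms I would first record the norm inputs. Since $O=\sum_j a_j O_j$ with each $O_j$ Hermitian unitary, $\norm{O}_2 \leq \sum_j\abs{a_j}=a_O$. Under the hypothesis $E_g-E_0\geq\beta/\tau^2$ the operator $H=\bar{H}-E_0\openone$ is positive semi-definite, so $\norm{e^{-\beta H}}_2\leq 1$, and Appendix~\ref{app:Integral} gives $\norm{G(H)-e^{-\beta H}}_2\leq\gamma_G$ precisely in this regime. Writing $G(H)=e^{-\beta H}+\Delta_G$ and expanding $G(H)OG(H)-e^{-\beta H}Oe^{-\beta H}=\Delta_G O e^{-\beta H}+e^{-\beta H}O\Delta_G+\Delta_G O\Delta_G$, submultiplicativity of the $2$-norm gives $\norm{G(H)OG(H)-e^{-\beta H}Oe^{-\beta H}}_2\leq a_O\gamma_G(2+\gamma_G)$, and sandwiching between the normalised $\ket{\Psi(0)}$ yields $\absLR{\mean{O}_G(-i\beta,i\beta)-\mean{O}(-i\beta,i\beta)}\leq a_O\gamma_G(2+\gamma_G)$. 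The truncation term is identical in structure: because $g(t)\geq 0$, $\norm{G_T(H)}_2\leq\int_{-T}^{T}g(t)\,dt\leq C\leq 1$ and likewise $\norm{G(H)}_2\leq 1$, while $\norm{G_T(H)-G(H)}_2\leq\gamma_T$ was already established in the main text; expanding $G_T(H)OG_T(H)-G(H)OG(H)$ the same way gives $\absLR{\mean{O}_{G_T}(-i\beta,i\beta)-\mean{O}_G(-i\beta,i\beta)}\leq a_O\gamma_T(2+\gamma_T)$.

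For the statistical term, $\hat{O}$ from Algorithm~\ref{alg:ITCqcmc} (with $M_s=1$) is an unbiased estimator of $\mean{O}_{G_T}(-i\beta,i\beta)$ with ${\rm Var}_{\hat{O}}\leq 2a_O^2 C_T^4/N_s$, as shown in Sec.~\ref{sec:CAITS}. Chebyshev's inequality, i.e.~Eq.~(\ref{eq:Pe}), then gives $P\big(\absLR{\hat{O}-\mean{O}_{G_T}(-i\beta,i\beta)}\geq a_O\delta\big)\leq 2C_T^4/(N_s\delta^2)=:P$. On the complementary event of probability $1-P$ the three bounds hold simultaneously, so the triangle inequality above gives $\absLR{\hat{O}-\mean{O}(-i\beta,i\beta)} < a_O\delta + a_O\gamma_T(2+\gamma_T) + a_O\gamma_G(2+\gamma_G) = a_O\epsilon$, which is the claim.

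I do not expect a genuine obstacle; the argument is essentially bookkeeping. The only points needing care are not discarding the quadratic cross terms $\Delta_G O\Delta_G$ and $\Delta_T O\Delta_T$ (which is why the bound carries the factor $2+\gamma_G$ rather than $2\gamma_G$), confirming $\norm{G(H)}_2,\norm{G_T(H)}_2\leq 1$ from positivity of $g(t)$, and invoking the hypothesis $E_g-E_0\geq\beta/\tau^2$ exactly where it is needed, namely to apply the Appendix~\ref{app:Integral} bound on $\norm{G(H)-e^{-\beta H}}_2$ and to guarantee $\norm{e^{-\beta H}}_2\leq 1$.
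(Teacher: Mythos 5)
Your proposal is correct and follows essentially the same route as the paper's proof: the same decomposition into the Gaussian-approximation error, the truncation error, and the statistical error, with the same bounds $a_O\gamma_G(2+\gamma_G)$, $a_O\gamma_T(2+\gamma_T)$, and the Chebyshev bound from Eq.~(\ref{eq:Pe}). You merely spell out the operator expansion and the norm bounds $\norm{G(H)}_2,\norm{G_T(H)}_2\leq 1$ that the paper states without detail.
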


\begin{proof}
$G$ is an integral over unitary operators, therefore, $\norm{G}_2\leq C\leq 1$. In the proof, we will also use $\norm{O}_2\leq a_O$. Note that $\ket{\Psi(0)}$ is a normalised state. 

There are three error sources. First, we use $G(H)$ to approximate $e^{-\beta H}$. The corresponding error is 
\begin{eqnarray}
\absLR{\mean{O}_G(-i\beta,i\beta) - \mean{O}(-i\beta,i\beta)} &\leq & a_O\gamma_G(2+\gamma_G),
\end{eqnarray}
where we have used the condition $E_g-E_0\geq \frac{\beta}{\tau^2}$. Second, we use $G_T(H)$ to approximate $G(H)$. The corresponding error is 
\begin{eqnarray}
\absLR{\mean{O}_{G_T}(-i\beta,i\beta) - \mean{O}_G(-i\beta,i\beta)} &\leq & a_O\gamma_T(2+\gamma_T).
\end{eqnarray}
Third, the statistical error in the estimate $\hat{O}$ is 
\begin{eqnarray}
\absLR{\hat{O} - \mean{O}_{G_T}(-i\beta,i\beta)} = e_{O} < a_O\delta,
\end{eqnarray}
and the inequality holds with the probability $1-P(e_{O}\geq a_O\delta)$. 

The total error in $\hat{O}$ is 
\begin{eqnarray}
d_O &=& \absLR{\hat{O} - \mean{O}(-i\beta,i\beta)} \leq a_O[\gamma_0(2+\gamma_0)+\gamma''(2+\gamma'')] + e_{O}.
\end{eqnarray}
Therefore, the inequality in the lemma holds with a probability larger than $1-P(e_{O}\geq a_O\delta)$. The bound of the probability is given by Eq.~(\ref{eq:Pe}). The lemma has been proved. 
\end{proof}

\begin{lemma}
If $E_g-E_0\geq \frac{\beta}{\tau^2}$, the inequality 
\begin{eqnarray}
\absLR{\frac{\hat{O}}{\hat{\openone}} - \mean{O}(\beta)} < a_O \frac{2\epsilon}{e^{-2\beta(E_g-E_0)}p_g-\epsilon}
\end{eqnarray}
holds with the probability $1-P$, where $P \leq \frac{4C_T^4}{N_s\delta^2}$. 
\label{le:Ob}
\end{lemma}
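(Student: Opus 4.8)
\emph{Proof plan.} The idea is to propagate the errors on the numerator and the denominator of the ratio, which are already controlled by Lemma~\ref{le:Obb}, through the quotient. Write $A = \mean{O}(-i\beta,i\beta)$ and $B = \mean{\openone}(-i\beta,i\beta)$, so that $\mean{O}(\beta) = A/B$ by Eq.~(\ref{eq:Obeta}), while $\hat{O}/\hat{\openone}$ is the estimate of this ratio. First I would apply Lemma~\ref{le:Obb} twice: once with the observable $O$, which gives $\absLR{\hat{O} - A} < a_O\epsilon$, and once with $O = \openone$ (for which $a_{\openone} = 1$), which gives $\absLR{\hat{\openone} - B} < \epsilon$; under the hypothesis $E_g - E_0 \geq \frac{\beta}{\tau^2}$ each of these holds with probability at least $1 - \frac{2C_T^4}{N_s\delta^2}$, by Eq.~(\ref{eq:Pe}).

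Next I would bound the denominator away from zero. Since $e^{-2\beta H}$ is positive semidefinite and the ground state carries weight $p_g$ in $\ket{\Psi(0)}$, expanding in the eigenbasis of $\bar H$ gives $B = \bra{\Psi(0)}e^{-2\beta H}\ket{\Psi(0)} = \sum_n p_n e^{-2\beta(E_g+E_n-E_0)} \geq p_g e^{-2\beta(E_g-E_0)}$. On the event $\absLR{\hat{\openone} - B} < \epsilon$ this forces $\absLR{\hat{\openone}} > p_g e^{-2\beta(E_g-E_0)} - \epsilon$, and the statement is vacuous unless the right-hand side is positive, so I may assume it is. I would also record that $\absLR{A} \leq a_O B$, hence $\absLR{\mean{O}(\beta)} \leq a_O$, using $\norm{O}_2 \leq a_O$ and the positivity of $e^{-2\beta H}$.

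The quotient error then follows from the elementary identity
\begin{eqnarray}
\frac{\hat{O}}{\hat{\openone}} - \frac{A}{B} &=& \frac{\hat{O} - A}{\hat{\openone}} - \mean{O}(\beta)\,\frac{\hat{\openone} - B}{\hat{\openone}},
\end{eqnarray}
which on the intersection of the two events yields $\absLR{\frac{\hat{O}}{\hat{\openone}} - \mean{O}(\beta)} \leq \frac{\absLR{\hat{O}-A} + \absLR{\mean{O}(\beta)}\,\absLR{\hat{\openone}-B}}{\absLR{\hat{\openone}}} < \frac{2a_O\epsilon}{p_g e^{-2\beta(E_g-E_0)} - \epsilon}$, i.e. the claimed bound. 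The failure probability is handled by a union bound over the two estimators, giving $P \leq \frac{4C_T^4}{N_s\delta^2}$ — this is exactly the estimate already worked out in Eq.~(\ref{eq:Pd}). The only delicate point is the joint (rather than marginal) control of numerator and denominator and the attendant doubling of the failure probability; everything else is the routine first-order propagation of error through a ratio together with the lower bound on $\bra{\Psi(0)}e^{-2\beta H}\ket{\Psi(0)}$.
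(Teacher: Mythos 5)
Your proposal is correct and follows essentially the same route as the paper: apply Lemma~\ref{le:Obb} to both $\hat{O}$ and $\hat{\openone}$, lower-bound the denominator by $p_g e^{-2\beta(E_g-E_0)}$, propagate the two errors through the quotient, and take a union bound to get $P \leq \frac{4C_T^4}{N_s\delta^2}$ via Eq.~(\ref{eq:Pd}). The only cosmetic difference is the exact algebraic splitting of $\frac{\hat{O}}{\hat{\openone}} - \frac{A}{B}$ (the paper bounds it with $B - d_{\openone}$ in the denominator rather than $\absLR{\hat{\openone}}$), which yields the identical final estimate.
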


\begin{proof}
The error in $\frac{\hat{O}}{\hat{\openone}}$ is 
\begin{eqnarray}
\absLR{ \frac{\hat{O}}{\hat{\openone}} - \mean{O}(\beta)} &\leq & \frac{\absLR{\mean{O}(-i\beta,i\beta)}d_{\openone}+\mean{\openone}(-i\beta,i\beta)d_{O}}{\mean{\openone}(-i\beta,i\beta)[\mean{\openone}(-i\beta,i\beta)-d_{\openone}]} \notag \\
&=& \frac{\absLR{\mean{O}(\beta)}d_{\openone}+d_{O}}{\mean{\openone}(-i\beta,i\beta)-d_{\openone}} \leq \frac{a_O d_{\openone}+d_{O}}{e^{-2\beta(E_g-E_0)}p_g-d_{\openone}}.
\end{eqnarray}
Here, we have used that $\absLR{\mean{O}(\beta)}\leq \norm{O}_2\leq a_O$ and $e^{-2\beta(E_g-E_0)}p_g \leq \mean{\openone}(-i\beta,i\beta) \leq 1$ when $E_0\leq E_g$. 

If $e_{\openone}<\delta$ and $e_{O}<a_O\delta$, we have $d_{\openone} < \epsilon$ and $d_{O} < a_O\epsilon$. Therefore, the inequality in the theorem holds with a probability larger than $1-P_{\delta}$. The bound of the probability is given by Eq.~(\ref{eq:Pd}). The lemma has been proved. 
\end{proof}

The following is the proof of Theorem~\ref{th:ITS}, which contains a protocol for choosing parameters in our ITS algorithm. The protocol is up to optimisation but sufficient for working out the scaling behaviour of our algorithm. 

\begin{proof}
Step-1 -- We take $E_0 = \hat{E}_g-\delta E-\beta^{-1}$ such that $\beta^{-1} \leq E_g - E_0 \leq 2\delta E+\beta^{-1}$. 

Step-2 -- We solve the equation 
\begin{eqnarray}
\eta = \frac{2\epsilon}{e^{-2\beta(2\delta E+\beta^{-1})}p_b-\epsilon} 
\end{eqnarray}
to work out 
\begin{eqnarray}
\epsilon = e^{-2\beta(2\delta E+\beta^{-1})}p_b [\frac{1}{2}\eta+O(\eta^2)].
\end{eqnarray}
With the solution, we have 
\begin{eqnarray}
\eta \geq \frac{2\epsilon}{e^{-2\beta(E_g - E_0)}p_g-\epsilon}.
\end{eqnarray}

Step-3 -- In the error budget, three error sources contribute equally. We take $\delta = \frac{1}{3}\epsilon$ and solve the equation 
\begin{eqnarray}
x(2+x) = \frac{1}{3}\epsilon,
\end{eqnarray}
to work out 
\begin{eqnarray}
x = \frac{1}{3}\epsilon+O(\epsilon^2).
\end{eqnarray}
Later, we will choose parameters such that $\gamma_G, \gamma_T \leq x$. 

Step-4 -- We take 
\begin{eqnarray}
\tau = \max\{\beta,\beta \sqrt{2\ln\frac{1}{x}}\} = O\left(\beta \sqrt{\ln\frac{1}{\epsilon}}\right).
\end{eqnarray}
Then, $E_g-E_0\geq \beta^{-1} \geq \frac{\beta}{\tau^2}$. Under this condition, the upper bound $\gamma_G$ holds. We have 
\begin{eqnarray}
\gamma_G = e^{-\frac{(E_g-E_0)^2\tau^2}{2}} \leq e^{-\frac{\tau^2}{2\beta^2}} \leq x.
\end{eqnarray}

Step-5 -- We take 
\begin{eqnarray}
T &=& \sqrt{2\tau^2\ln\frac{\sqrt{2}\tau}{\sqrt{\pi}x\beta}} = 2\sqrt{\beta^2 \ln\left(\frac{2}{\sqrt{\pi}x}\sqrt{\ln\frac{1}{x}}\right) \ln\frac{1}{x}} = O\left(\beta \ln\frac{1}{\epsilon}\right),
\end{eqnarray}
then 
\begin{eqnarray}
\gamma_T = x.
\end{eqnarray}

Step-6 -- We choose a value of $C_{max}$, which is larger and close to $1$, e.g.~$C_{max}=1.1$, and solve the equation 
\begin{eqnarray}
C_{max} = C_{A}(T/N_t)^{N_t}
\end{eqnarray}
to work out $N_t$. Because $C_{A}(T/N_t)^{N_t} = 1 + O\left(\frac{h_{tot}^2T^2}{N_t}\right)$, we have 
\begin{eqnarray}
N_t &=& O\left(\frac{h_{tot}^2T^2}{C_{max}-1}\right) = O\left(h_{tot}^2\beta^2 \left(\ln\frac{1}{\epsilon}\right)^2\right).
\end{eqnarray}

Step-7 -- We take 
\begin{eqnarray}
N_s = \frac{4C_{max}^4}{\kappa\delta^2} = O\left(\frac{1}{\kappa\epsilon^2}\right),
\end{eqnarray}
then 
\begin{eqnarray}
\frac{4C_T^4}{N_s\delta^2} \leq \frac{4C_{max}^4}{N_s\delta^2} = \kappa.
\end{eqnarray}
\end{proof}

\section{Bounds of the projection error}
\label{app:Perror}

To work out an upper bound of the projection error, we consider the functional of the weight function $w(x)$, 
\begin{eqnarray}
y = \frac{\int dx w(x) f(x)}{\int dx w(x)},
\end{eqnarray}
where $f(x) \equiv \frac{e^{-x} x}{\alpha+e^{-x}}$. Taking $w(x) = \sum_{n=2}^D p_n \left(\alpha+e^{-x}\right) \delta(x-2\beta E_n)$, we can express the error as $\mean{\bar{H}}(\beta) - E_g = (2\beta)^{-1}y$. Because $y\leq \max_x f(x)$, we have 
\begin{eqnarray}
\mean{\bar{H}}(\beta) - E_g \leq (2\beta)^{-1} \max_x f(x).
\label{eq:Perror2}
\end{eqnarray}

The derivative of the function is 
\begin{eqnarray}
f'(x) = \frac{e^{-x}}{(\alpha+e^{-x})^2}[\alpha(1-x)+e^{-x}].
\end{eqnarray}
The maximum value of the function is at $x = x_0$, which is the solution of the equation $\alpha(1-x)+e^{-x} = 0$. We can calculate the solution via the product logarithm. Given $x_0$, the maximum value is $\max_x f(x) = x_0-1 = \alpha^{-1}e^{-x0}$. Instead of using the exact solution, we consider $x_1 = 1+\ln(1+e^{-1}\alpha^{-1})$, and we have 
$\alpha(1-x_1)+e^{-x_1} \leq 0$. Therefore, $x_0 \leq x_1$, and $\max_x f(x) \leq x_1-1 = \ln(1+e^{-1}\alpha^{-1})$. Replacing $\max_x f(x)$ with $\ln(1+e^{-1}\alpha^{-1})$ in Eq.~(\ref{eq:Perror2}), we can obtain the upper bound in Eq.~(\ref{eq:Perror})

With the gap, the upper bound is given by $\max_{x\in[2\beta\Delta,\infty)} f(x)$. We assume that $2\beta\Delta \geq x_1$. Then, $\max_{x\in[2\beta\Delta,\infty)} f(x) = f(2\beta\Delta)$ because $f'(x)\leq 0$ when $x\geq x_1$. The upper bound with a finite gap is 
\begin{eqnarray}
\mean{\bar{H}}(\beta)-E_g &\leq & \frac{1}{2\beta} f(2\beta\Delta),
\end{eqnarray}
which is the same bound as in Eq.~(\ref{eq:gap}) according to the definition of function $f$. 

\section{Proof of Theorem~\ref{th:GSS}}
\label{app:proofGSS}

\subsection{The general case}
\label{app:general}

\textbf{Each iteration.} First, we give details of how to choose parameters in ITS in each round of iteration. Let $\hat{E}_g$ and $\delta E$ be outputs of the previous round. We take parameters as follows. 

Step-1 -- We take $\beta = \frac{1}{\delta E} \ln\left(1+\frac{1}{e\alpha_b}\right) = O(\delta E^{-1})$, where $\alpha_b = \frac{p_b}{1-p_b} \leq \alpha$. Then $\absLR{\mean{\bar{H}}(\beta) - E_g} \leq \frac{\delta E}{2}$, see Eq.~(\ref{eq:Perror}). Note that $\mean{\bar{H}}(\beta) - E_g$ is always positive. 

Step-2 -- We take $\eta = \frac{\delta E}{4h_{tot}}$ and choose parameters in ITS according to the protocol in the proof of Theorem~\ref{th:ITS}. Note that $a_O = h_{tot}$ when $O = \bar{H}$. In each round of iteration, we set the failure probability upper bound as $\frac{\kappa}{N_i}$ instead of $\kappa$, where $N_i$ is the number of iterations. 

According to Theorem~\ref{th:ITS}, the error in ITS is smaller than $\frac{\delta E}{4}$, i.e. 
\begin{eqnarray}
\absLR{\frac{\hat{\bar{H}}}{\hat{\openone}} - \mean{\bar{H}}(\beta)} < \frac{\delta E}{4}, 
\end{eqnarray}
with a probability higher than $1-\frac{\kappa}{N_i}$. Then the total error is smaller than $\frac{3\delta E}{4}$ with the same probability lower bound $1-\frac{\kappa}{N_i}$. 

Taking $\beta$ according to $\delta E$, we have 
\begin{eqnarray}
\epsilon = O\left(\left(1+\frac{1}{e\alpha_b}\right)^{-4}\eta\right) = O\left(\frac{\delta E}{h_{tot}}\right).
\label{eq:eps}
\end{eqnarray}
Substitute Eq.~(\ref{eq:eps}) into Eqs.~(\ref{eq:NT}) and (\ref{eq:Ns}), we obtain 
\begin{eqnarray}
N_t = O\left(\frac{h_{tot}^2}{\delta E^2} \left(\ln\frac{h_{tot}}{\delta E}\right)^2\right)
\label{eq:NT2}
\end{eqnarray}
and 
\begin{eqnarray}
N_s = O\left(\frac{N_ih_{tot}^2}{\kappa\delta E^2}\right).
\label{eq:Ns2}
\end{eqnarray}

\textbf{Total cost.} The initial estimate is $\hat{E}_g = 0$ and $\delta E = h_{tot}$. To reduce the uncertainty $\delta E$ to the desired accuracy $h_{tot}\xi$, we take the number of iterations $N_i = \left\lceil\frac{\ln\xi}{\ln\frac{3}{4}}\right\rceil$. Because for each iteration the failure probability has the upper bound $\frac{\kappa}{N_i}$, then the total failure probability is lower than $\kappa$. 

The cost of each iteration increases with $\delta E^{-1}$, therefore, the last step has the largest cost. Substituting $\delta E = O(\xi h_{tot})$ into Eq.~(\ref{eq:NT2}), we obtain $N_t$ of the last step, which is $N_{t,max}$. The sample size of the last step is 
\begin{eqnarray}
N_{s,max} = O\left(\frac{1}{\kappa\xi^2}\ln\frac{1}{\xi}\right).
\end{eqnarray}
Note that the factor $\ln\frac{1}{\xi}$ is due to $N_i$. The total sample size $N_{s,tot}$ is smaller than $N_iN_{s,max}$. 

\subsection{The case with a finite gap}

Given a lower bound $\Delta_b$ of the energy gap, the imaginary-time evolution with 
\begin{eqnarray}
\beta_\Delta = \frac{1}{2\Delta_b} \ln\frac{4}{\alpha_b\xi} = O\left(\frac{1}{\Delta_b}\ln\frac{1}{\xi}\right)
\end{eqnarray}
is sufficient to reduce the projection error to the desired level. However, before implementing ITS with the imaginary time $\beta_\Delta$, we have to work out a preliminary estimate of the ground-state energy with sufficient accuracy. This can be achieved following the general-case approach. 

With the finite energy gap, the algorithm has two stages. In the first stage, we follow the approach in Appendix~\ref{app:general} to reduce the uncertainty $\delta E$ to 
\begin{eqnarray}
\delta E = \beta_\Delta^{-1} = O\left(\Delta_b\left(\ln\frac{1}{\xi}\right)^{-1}\right)
\end{eqnarray}
instead of the ultimate desired accuracy $h_{tot}\xi$. We use 
\begin{eqnarray}
\xi' = \frac{\beta_\Delta^{-1}}{h_{tot}} = O\left(\frac{\Delta_b}{h_{tot}}\left(\ln\frac{1}{\xi}\right)^{-1}\right).
\end{eqnarray}
to denote this intermediate desired accuracy. Then, the cost in the first stage is 
\begin{eqnarray}
N_{t,max}^{(1)} = O\left(\frac{1}{\xi^{\prime 2}} \left(\ln\frac{1}{\xi'}\right)^2\right)
\end{eqnarray}
and 
\begin{eqnarray}
N_{s,tot}^{(1)} = O\left(\frac{1}{\kappa\xi^{\prime 2}}\left(\ln\frac{1}{\xi'}\right)^2\right).
\end{eqnarray}
In the following, we assume that $\xi' > \xi$ to work out the scaling with $\xi$. 

In the second stage, we take $\beta = \beta_\Delta$ and $\eta = \frac{\xi}{2}$. We choose parameters in ITS according to the protocol in the proof of Theorem~\ref{th:ITS}. 

Because $p_b$ is the lower bound of $p_g$, we can always take $p_b = \frac{3}{3+e}$ if the input lower bound to the algorithm is higher than $\frac{3}{3+e}$ (for the simplicity of the proof). Then, $p_b \leq \frac{3}{3+e}$ always holds. Under this condition, $\ln\frac{4}{\alpha_b}\geq x_{1,b}$, where $x_{1,b} = 1+\ln(1+e^{-1}\alpha_b^{-1})$. Because $\alpha_b\leq \alpha$, we have $x_{1,b}\geq x_1$ and $\ln\frac{4}{\alpha_b}\geq x_1$. When $\xi$ is a small number, $\ln\frac{4}{\alpha_b \xi} \geq \ln\frac{4}{\alpha_b}$ and $2\beta\Delta\geq x_1$. Then, we can apply Eq.~(\ref{eq:gap}), and 
\begin{eqnarray}
\abs{\mean{\bar{H}}(\beta)-E_g} &\leq & \frac{e^{-\ln\frac{4}{\alpha_b\xi}\frac{\Delta}{\Delta_b}}\Delta}{\alpha +e^{-\ln\frac{4}{\alpha_b\xi}\frac{\Delta}{\Delta_b}}} \leq \frac{e^{-\ln\frac{4}{\alpha_b\xi}}\Delta}{\alpha +e^{-\ln\frac{4}{\alpha_b\xi}}} = \frac{\alpha_b\xi\Delta}{4\alpha+\alpha_b\xi} \notag \\
&\leq & \frac{\xi\Delta}{4+\xi} \leq \frac{\xi\Delta}{4} \leq \frac{\xi h_{tot}}{2}.
\end{eqnarray}
where we have used that $\Delta \leq 2h_{tot}$. Note that $\mean{\bar{H}}(\beta) - E_g$ is always positive. With $\eta = \frac{\xi}{2}$, the total error is smaller than $h_{tot}\xi$. 

With $\delta E = \beta_\Delta^{-1}$, $\beta = \beta_\Delta$ and $\eta = \frac{\xi}{2}$, we have 
\begin{eqnarray}
\epsilon = O\left(e^{-4}\eta\right) = O\left(\xi\right).
\end{eqnarray}
Then the cost in the second stage is 
\begin{eqnarray}
N_t^{(2)} = O\left(\frac{h_{tot}^2}{\Delta_b^2} \left(\ln\frac{1}{\xi}\right)^4\right)
\end{eqnarray}
and 
\begin{eqnarray}
N_s^{(2)} = O\left(\frac{1}{\kappa\xi^2}\right).
\end{eqnarray}

Because $\xi' > \xi$, 
\begin{eqnarray}
N_{t,max} &=& \max\{N_{t,max}^{(1)},N_t^{(2)}\} = O\left(\frac{h_{tot}^2}{\Delta_b^2} \left(\ln\frac{1}{\xi}\right)^4\right)
\end{eqnarray}
and 
\begin{eqnarray}
N_{s,tot} = N_{s,tot}^{(1)} + N_s^{(2)} = O\left(\frac{1}{\kappa\xi^2}\left(\ln\frac{1}{\xi}\right)^2\right).~
\end{eqnarray}

\section{Proof of Theorem~\ref{th:ITSGSS}}
\label{app:proofITSGSS}

First, we take $\xi = \frac{1}{\beta h_{tot}}$ in iterative GSS (without the energy gap assumption). Then the final uncertainty of the ground-state energy is $\delta E = \frac{1}{\beta}$. We choose the sample size in iterative GSS such that it fails with a probability lower than $\kappa/2$. Substituting $\xi$ into Eqs.~(\ref{eq:NTmax}) and (\ref{eq:NsTOT}), we can work out the cost of iterative GSS. 

Second, with the ground-state energy and the uncertainty $\delta E = \beta^{-1}$, we implement the task ITS. The cost of task ITS is given by Eqs.~(\ref{eq:NT}) and (\ref{eq:Ns}). Note that $\epsilon = O\left(e^{-4}\eta\right)$. We also choose the sample size in task ITS such that it fails with a probability lower than $\kappa/2$. Then, the overall failure probability is lower than $\kappa$. 

Consider both iterative GSS and task ITS, the largest Trotter step number is the maximum of $O\left(\frac{1}{\xi^2} \left(\ln\frac{1}{\xi}\right)^2\right)$ and $O\left(h_{tot}^2\beta^2 \left(\ln\frac{1}{\eta}\right)^2\right)$, 
and the total sample size is 
\begin{eqnarray}
N_{s,tot} = O\left(\frac{1}{\kappa\xi^2}\left(\ln\frac{1}{\xi}\right)^2\right) + O\left(\frac{1}{\kappa\eta^2}\right).
\end{eqnarray} 
Note that $\xi = \frac{1}{\beta h_{tot}}$. 

\section{Details of the numerical simulation}
\label{app:details}

Three models are considered in the numerical study. The Hamiltonian of the transverse-field Ising model is 
\begin{eqnarray}
\bar{H} = - (2-\lambda)\sum_{\langle i,j \rangle} \sigma^z_i\sigma^z_j - \lambda\sum_i \sigma^x_i.
\end{eqnarray}
The Hamiltonian of the anti-ferromagnetic Heisenberg model is 
\begin{eqnarray}
\bar{H} = \sum_{\langle i,j \rangle} \frac{2-\lambda}{2}(\sigma^x_i\sigma^x_j+\sigma^y_i\sigma^y_j)+\lambda \sigma^z_i\sigma^z_j.
\end{eqnarray}
The Hamiltonian of the Fermi-Hubbard model is 
\begin{eqnarray}
\bar{H} &=& - (2-\lambda)\sum_{\langle i,j \rangle} \sum_{s=\uparrow,\downarrow} (a_{i,s}^\dag a_{j,s}+h.c.) + 2\lambda \sum_i (2a_{i,\uparrow}^\dag a_{i,\uparrow}-\openone)(2a_{i,\downarrow}^\dag a_{i,\downarrow}-\openone).
\end{eqnarray}
To simulate the Fermi-Hubbard model in quantum computing, we use the Jordan-Wigner transformation to translate the Fermion Hamiltonian into a qubit Hamiltonian~\cite{Ortiz2001}. The one-dimensional Fermi-Hubbard model is translated into the qubit Hamiltonian 
\begin{eqnarray}
\bar{H} = H_X + H_Y + H_Z,
\end{eqnarray}
where 
\begin{eqnarray}
H_X &=& - \frac{2-\lambda}{2} \left[\sum_{i=1}^{n_{site}-1} \left(\sigma^x_{i} \sigma^x_{i+1} + \sigma^x_{n_{site}+i} \sigma^x_{n_{site}+i+1}\right)\right. \notag \\
&&\left. + \prod_{j=2}^{n_{site}-1}\sigma^z_{j} \sigma^y_{1} \sigma^y_{n_{site}} + \prod_{j=2}^{n_{site}-1}\sigma^z_{n_{site}+j} \sigma^y_{n_{site}+1} \sigma^y_{2n_{site}} \right], \\
H_Y &=& - \frac{2-\lambda}{2} \left[\sum_{i=1}^{n_{site}-1} \left(\sigma^y_{i} \sigma^y_{i+1} + \sigma^y_{n_{site}+i} \sigma^y_{n_{site}+i+1}\right)\right. \notag \\
&&\left. + \prod_{j=2}^{n_{site}-1}\sigma^z_{j} \sigma^x_{1} \sigma^x_{n_{site}} + \prod_{j=2}^{n_{site}-1}\sigma^z_{n_{site}+j} \sigma^x_{n_{site}+1} \sigma^x_{2n_{site}} \right], \\
H_Z &=& 2\lambda \sum_i \sigma^z_{i} \sigma^z_{n_{site}+i}.
\end{eqnarray}
For all three models, we choose the periodic boundary condition: The topology is a ring for one-dimensional models, and the topology is a torus for two-dimensional models. 

For randomly generated models, we consider Hamiltonians in the form 
\begin{eqnarray}
\bar{H} = - \lambda_0 \sum_{j=1}^{n_{spin}} \sigma^z_j + \sum_{l=1}^{n_{term}} \lambda_l P_l,
\end{eqnarray}
where $P_l \in \{\sigma^i,\sigma^x,\sigma^y,\sigma^z\}^{\otimes n_{spin}}$ are Pauli operators, and $\sigma^i$ is the identity operator. By taking Hamiltonian in this form, the spectrum is likely to have a finite energy gap between the ground and first-excited states, and the ground state has a finite overlap with $\ket{0}^{\otimes n_{spin}}$. We generate $P_l$ in two ways. In the $k$-local test, we take $P_l = \prod_{j=1}^{n_{spin}} \sigma^{\alpha_j}_j$, and each $\alpha_j$ is drawn from $(i,x,y,z)$ with probabilities $(\frac{1}{2},\frac{1}{6},\frac{1}{6},\frac{1}{6})$, respectively. Therefore, $k = n_{spin}/2$ on average. In the $2$-local test, we take $P_l = \sigma^{\alpha_{j_1}}_{j_1} \sigma^{\alpha_{j_2}}_{j_2}$, and $\sigma^{\alpha_{j_1}}_{j_1}$ and $\sigma^{\alpha_{j_2}}_{j_2}$ are chosen as follows. When $l\leq n_{spin}-1$, we take $j_2 = l+1$ and randomly choose $j_1$ from numbers smaller than $l+1$, and each $\alpha$ is drawn from $(x,y)$. In this way, all spins are coupled. When $l \geq n_{spin}$, we randomly choose a pair of qubits for $j_1$ and $j_2$, and each $\alpha$ is drawn from $(x,y,z)$. Parameters $\lambda$ are taken as follows. We take $\lambda_0 = 2n_{spin}/(n_{spin}+n_{term})$. Initially, each $\lambda_l$ is a random number in the range $-1$ to $1$; then, $\lambda_l$ are normalised such that $\sum_{l=1}^{n_{term}} \abs{\lambda_l} = 2n_{spin}n_{term}/(n_{spin}+n_{term})$. In this way, we have $n_{spin}\lambda_0 + \sum_{l=1}^{n_{term}} \abs{\lambda_l} = 2n_{spin}$. Note that each term ($\sigma^z_j$ or $P_l$) in the Hamiltonian has the same strength on average. 

In Trotterisation, we decompose the Hamiltonian as follows. For the transverse-field Ising model, we take $H_1 = - (2-\lambda)\sum_{\langle i,j \rangle} \sigma^z_i\sigma^z_j$ and $H_2 = - \lambda\sum_i \sigma^x_i$. For the anti-ferromagnetic Heisenberg model, we take $H_1 = \sum_{\langle i,j \rangle} \frac{2-\lambda}{2}\sigma^x_i\sigma^x_j$, $H_2 = \sum_{\langle i,j \rangle} \frac{2-\lambda}{2}\sigma^y_i\sigma^y_j$ and $H_3 = \sum_{\langle i,j \rangle} \lambda \sigma^z_i\sigma^z_j$. For the Fermi-Hubbard model, we take $H_1 = H_X$, $H_2 = H_Y$ and $H_3 = H_Z$. For randomly generated models, each $H_j$ is a Pauli-operator term in the Hamiltonian. 

We take the initial state as follows. For the transverse-field Ising model, $\ket{\Psi(0)} = \left(\frac{\ket{0}+\ket{1}}{\sqrt{2}}\right)^{\otimes n_{spin}}$. For the anti-ferromagnetic Heisenberg model, $\ket{\Psi(0)} = \left(\frac{\ket{01}-\ket{10}}{\sqrt{2}}\right)^{\otimes n_{spin}/2}$, i.e.~each pair of nearest neighboring qubits are initialised in the state with a total spin of zero. For the Fermi-Hubbard model, $\ket{\Psi(0)} = \prod_{l=1}^{n_{site}/2} \left(\frac{a_{2l-1,\uparrow}^\dag a_{2l,\downarrow}^\dag + a_{2l,\uparrow}^\dag a_{2l-1,\downarrow}^\dag}{\sqrt{2}}\right) \ket{Vac}$, where $\ket{Vac} = \ket{0}^{\otimes n_{spin}}$ denotes the vacuum state. For randomly generated models, $\ket{\Psi(0)} = \ket{0}^{\otimes n_{spin}}$. 

\begin{figure*}[tbp]
\begin{center}
\includegraphics[width=1\linewidth]{\figpath/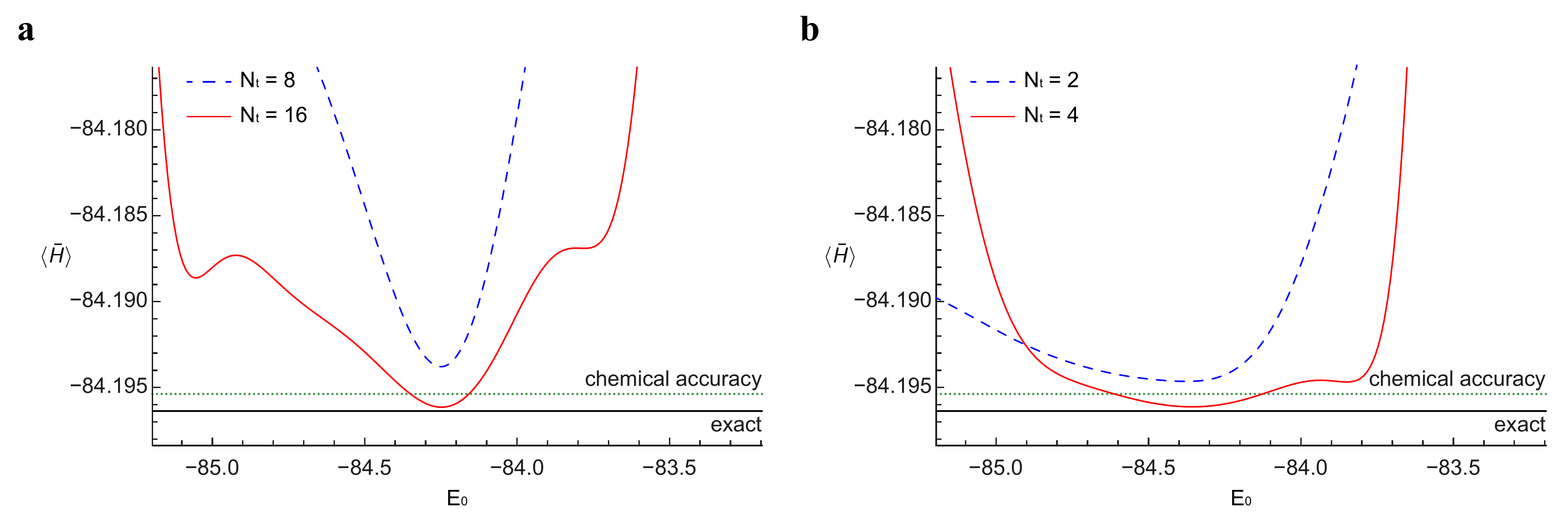}
\caption{
Ground-state energy of the water molecule computed using our algorithm with (a) the first- and (b) second-order Trotter formulas. The expected value of the electron Hamiltonian at the imaginary time $\beta = 3~E_h^{-1}$ is plotted, and the nuclear repulsion energy is not taken into account. The first-order formula with $16$ Trotter steps and the second-order formula with $4$ Trotter steps are sufficient for achieving the chemical accuracy. Note that we find these adequate step numbers by doubling the step number each time, i.e.~they are not necessarily the minimum step numbers for the chemical accuracy. Quantum subspace diagonalisation is not used in the calculation. 
}
\label{fig:H2Oapp}
\end{center}
\end{figure*}

For the water molecule, we compute the ground-state energy in a minimal STO-3G basis of 10 electrons in 14 spin orbitals as the same as in Ref.~\cite{Wecker2014}. The Hamiltonian of electrons in the water molecule at bound length $0.9584$ \r{A} and bound angle $104.45^{\circ}$ is generated and encoded into qubits using {\it qiskit{\_}nature}~\cite{qiskit_nature}. The unit of energy is hartree ($E_h$). In our algorithm, we take $\beta = 3~E_h^{-1}$ and $\tau = 2\beta$. For the initial state, we remove two-particle terms from the original Hamiltonian and calculate the ground state of the Hamiltonian with only one-particle terms, and we take the ground state of the one-particle Hamiltonian as the initial state. The first- and second-order Trotter formulas are used in our simulation, and results are shown in Fig.~\ref{fig:H2Oapp}. 

In the numerical simulation, we neglect quantum-machine errors (e.g.~decoherence) and statistical errors, and we aim at an `exact' computation of the integral over time $t$. In the numerical integration, we use the simplest midpoint rule with the step size $\delta t = T/20$. The numerical integration is truncated at $t = \pm 10 T$. 

To obtain a stable inverse in the numerical calculation, we apply a truncation on eigenvalues of $A$~\cite{Epperly2022}. We suppose eigenvalues are $\lambda_1,\lambda_2,\ldots,\lambda_d$ in descending order, and the number of eigenvalues greater than $\epsilon_t = 10^{-10}$ is $d_t$. Then, the inverse matrix $\sqrt{\Lambda^{-1}}$ is replaced by the $d_t\times d$ matrix $\tilde{\Lambda}_t$ with elements $\tilde{\Lambda}_{t; i,j} = \delta_{i,j} \sqrt{\lambda_i^{-1}}$. Accordingly, $H_{eff}$ is a $d_t\times d_t$ matrix. 

\begin{figure*}[tbp]
\begin{center}
\includegraphics[width=1\linewidth]{\figpath/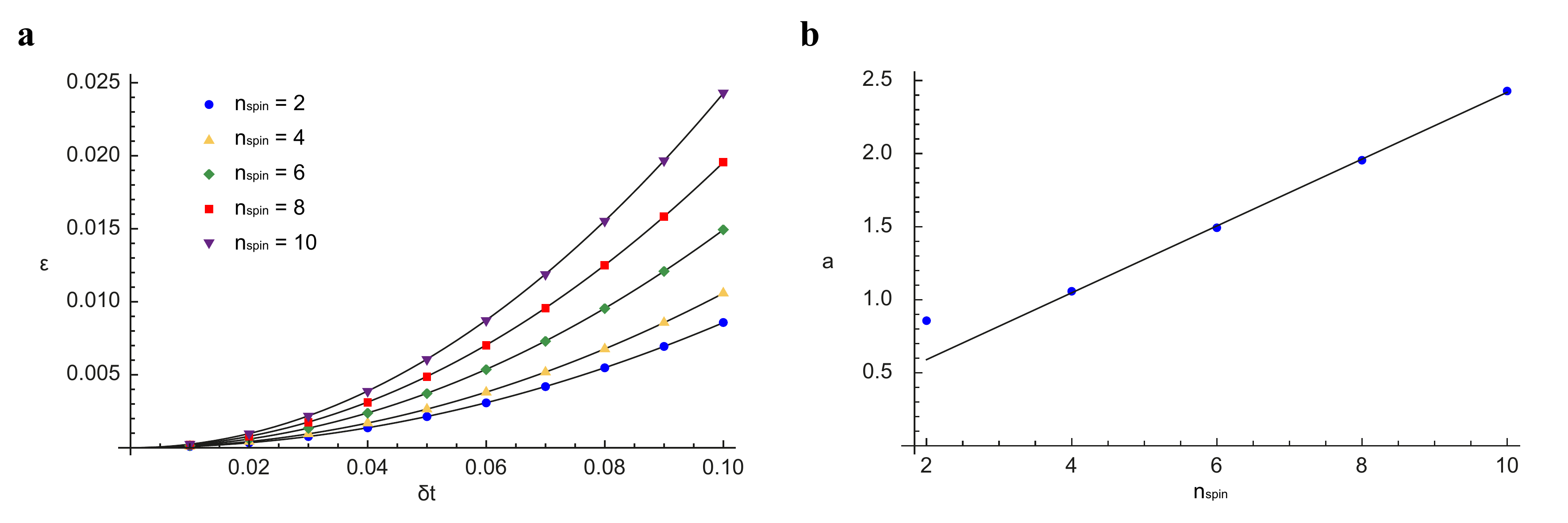}
\caption{
(a) Error $\epsilon$ in the ground-state energy of the one-dimensional transverse-field Ising model with $n_{spin}$ spins. The ground-state energy is measured using quantum phase estimation, and the time evolution is implemented using Trotterisation with the step size $\delta t$. The black curves are obtained by fitting the data using $\epsilon = a\delta t^2$. (b) The factor $a$ as a function of $n_{spin}$. The black curve is obtained by fitting the data using $a = u n_{spin} + v$. Note that the data point of $n_{spin} = 2$ is not used in fitting. 
}
\label{fig:QPE}
\end{center}
\end{figure*}

In QPE, eigenvalues of a Hamiltonian are estimated by measuring the phase $e^{-iE_n t}$ due to real-time evolution. Here, $E_n$ is an eigenvalue, and $t$ is the evolution time. To achieve the energy resolution $\epsilon$, the required evolution time is $t \sim \pi\epsilon^{-1}$~\cite{Wecker2014}, and it is similar for time series analysis~\cite{Somma2019}. We follow the approach in Ref.~\cite{Wecker2014} to analyse the impact of Trotterisation errors in QPE. In Trotterisation, the operator $S_1(\delta t)$ is implemented to approximate the exact time evolution operator $e^{-iH\delta t}$. Therefore, the spectrum of the effective Hamiltonian $\tilde{H} = \frac{i}{\delta t} \ln S_1(\delta t)$ is measured in phase estimation. Then, the error is the difference between ground-state energies of $H$ and $\tilde{H}$. For the one-dimensional transverse-field Ising model, we obtain the error for varies $n_{spin}$ and $\delta t$ by numerically simulating Trotterisation, and results are plotted in Fig.~\ref{fig:QPE}. By fitting the data, we find that the error scales with $n_{spin}$ and $\delta t$ in the form 
\begin{eqnarray}
\epsilon = a\delta t^2
\end{eqnarray}
and 
\begin{eqnarray}
a = 0.228605 n_{spin} + 0.132962.
\end{eqnarray}
Therefore, given $\epsilon$, we take $\delta t = \sqrt{\epsilon/a}$. The Trotter step number is $N_t = t/\delta t = \frac{\pi \sqrt{a}}{\epsilon\sqrt{\epsilon}}$. Now, we can estimate Trotter step numbers in QPE. The results are shown in Table~\ref{table:TSN}. Note that we take $\epsilon = \epsilon_G$ to compare Trotter step numbers in QPE and our algorithm. 

\bibliographystyle{unsrtnat}
\bibliography{IMQCMC}

\end{document}